\theoremstyle{plain}
\newtheorem{lemma}{Lemma}
\newtheorem{proposition}{Proposition}
\newtheorem{theorem}{Theorem}
\newtheorem{corollary}{Corollary}
\theoremstyle{definition}
\title{Exact and approximation algorithms for\\ the expanding search problem}
\author[a]{Ben Hermans\thanks{The author is funded by a PhD Fellowship of the Research Foundation -- Flanders.}\textsuperscript{,}}
\author[a]{Roel Leus}
\author[b]{Jannik Matuschke}
\affil[a]{Research Center for Operations Research \& Business Statistics, KU~Leuven, Belgium}
\affil[b]{Research Center for Operations Management, KU Leuven, Belgium}
\date{} 
\begin{document}
	\maketitle
	\begin{abstract}
		\noindent Suppose a target is hidden in one of the vertices of an edge-weighted graph according to a known probability distribution. The expanding search problem asks for a search sequence of the vertices so as to minimize the expected time for finding the target, where the time for reaching the next vertex is determined by its distance to the region that was already searched.
		This problem has numerous applications, such as searching for hidden explosives, mining coal, and disaster relief. 
		In this paper, we develop exact algorithms and heuristics, including a branch-and-cut procedure, a greedy algorithm with a constant-factor approximation guarantee, and a novel local search procedure based on a spanning tree neighborhood.
		Computational experiments show that our branch-and-cut procedure outperforms all existing methods for general instances and both heuristics compute near-optimal solutions with little computational effort.
	\end{abstract}
	
	\section{Introduction}
	The problem of searching a hidden target plays a major role in the daily operations of search and rescue teams, intelligence services, and law enforcement agencies. Research and development departments in corporations also face a search problem when they need to identify a viable product design. One way to represent the search space is by a graph in which each vertex has a given probability to contain the target and each edge  has a length that reflects the distance (cost, time, or a composite measure) to move between the adjacent vertices. We study the problem of a searcher who, starting from a predetermined root, needs to determine how to search the graph in order to minimize the expected distance traveled before finding the target.
	
	The traditional paradigm of \emph{pathwise search} assumes that the searcher can only continue searching from their current location, which leads to a search trajectory that forms a walk in the graph. In this paper, we adopt the alternative paradigm of \emph{expanding search}, as introduced by \citet{alpern2013mining}. Here, the searcher can proceed from every previously visited vertex and the resulting trajectory forms a tree in the graph. Such a search arises naturally when the cost to traverse an edge for a second time is negligible compared to the first traversal. When mining for coal, for example, moving within existing tunnels can occur in time negligible compared to the time needed to dig into a new site \citep{alpern2013mining}. Similarly, when securing an area for a hidden explosive, moving within a secured area takes considerably less time than securing a new location \citep{angelopoulos2019expanding}.
	
	Another application for the expanding search paradigm follows from the situation where we want to minimize the weighted average (re)connection time of a graph's vertices to its root \citep{averbakh2012flowtime}. In the context of disaster relief, for example, the vertices could reflect villages, the root an aid center, and the edges road segments that have been destroyed by a natural disaster. Edge lengths represent the time necessary to clear the roads and once a road segment has been cleared, the time to travel over it becomes negligible. The problem then consists in deciding the order in which to clear the roads so as to minimize the weighted reconnection time of all villages to the aid center, where the weighting occurs according to the relative importance of different villages. This is equivalent to the expanding search problem if we interpret vertex weights as a measure of relative importance rather than as a probability that the respective vertex contains the target.

	\paragraph{Contribution and main findings.}
	In this paper, we study exact and approximation algorithms for the expanding search problem. While a substantial amount of work has been done for the pathwise search problem, research on expanding search has only emerged relatively recently. In particular, to the best of our knowledge, no approximation algorithm for the problem has been developed so far. This paper fills this gap by showing how a greedy approach yields a constant-factor approximation algorithm. As a second contribution, we develop a branch-and-cut procedure that finds an optimal expanding search sequence for instances that were not solved by  previous methods. Finally, we also describe a local search procedure.
	
	Our branch-and-cut procedure relies on a novel mixed integer program for the expanding search problem that exploits the knowledge of how to solve the special case of trees. We build upon the single-machine scheduling literature \citep{correa2005single} to obtain a linear program that identifies an optimal search sequence for a given tree, and we embed this formulation in a mixed integer program to select an optimal tree within the graph. Inspired by well-known results for the minimum spanning tree problem, we obtain valid inequalities that form the basis of our branch-and-cut procedure. Computational experiments show that our branch-and-cut procedure outperforms the existing methods for general instances and that the computational performance greatly depends on the graph's density.

	The greedy approximation algorithm starts from the intuitive idea that one prefers to probe groups of nearby vertices with a high probability to contain the object. More specifically, we employ the concept of the \emph{search density} of a subtree, defined as the ratio of (i) the total probability mass of all vertices in the subtree and (ii) the total length of all edges in the subtree. Using a proof technique inspired by \cite{feige2004approximating}, we show that a greedy approach of approximating the subtree of maximum search density and then searching that tree leads to an 8-approximation algorithm. Computational experiments suggest that, empirically, the greedy approach produces solutions much closer to optimality than the factor eight worst-case guarantee.

	Our local search procedure, finally, iteratively improves the current solution within an edge-swap neighborhood based on the spanning tree representation of the search sequence, exploiting our ability to find the optimal search for a given tree efficiently.
	We observe that it is crucial to consider all edge swaps in the metric closure of the input graph.
	We further show that this procedure finds the optimum when the underlying graph is a cycle. This contrasts with a related insertion-based local search of \cite{averbakh2012flowtime}, for which we observe that it can perform arbitrarily bad on a cycle. Moreover, our local search found the optimum for around~80\% of the instances in our computational experiment and it exceeded the optimal value by at most 2.15\% for the other cases.

	\paragraph{Related work.}
	Since the seminal work of \citet{koopman1956theory1,koopman1956theory2,koopman1957theory3}, search theory has grown to a rich area of research; see the books of \cite{alpern2003theory}, \cite{stone2007theory}, \cite{alpern2013search}, and \cite{stone2016} for general overviews. We focus on the work directly related to the expanding search problem, and compare it to previous work on the pathwise search problem.

	It has been established that the pathwise search problem is NP-hard \citep{trummel1986complexity}, even if the underlying graph forms a tree \citep{sitters2002minimum}.  \cite{archer2008faster} describe a \mbox{7.18-approximation} algorithm for the problem, and \cite{li2018multiple} give an exact solution method that also allows for multiple searchers. Many of these results exploit the connection between the pathwise search problem and the minimum latency problem, also known as the traveling repairman problem, which asks for a tour that visits all vertices and minimizes the sum of arrival times \citep{afrati1986complexity}. Indeed, the pathwise search problem can be seen as a node-weighted version of the minimum latency problem \citep{Koutsoupias1996,ausiello2000salesmen}.

	Considerably less work has been done for the expanding search problem.
	\cite{averbakh2012flowtime} show that the problem is NP-hard in general and develop exact solution methods. \cite{averbakh2012path} develops polynomial-time algorithms for a more general model that allows for multiple searchers, but only for the special case where the graph is a path. \cite{alpern2013mining} describe a polynomial algorithm for the expanding search problem when the underlying graph is a tree, and  \cite{tan2019scheduling} study both exact and approximation algorithms when there are multiple searchers, again for the special case of trees. \cite{fokkink2019submodular}, finally, generalize the algorithm of \cite{alpern2013mining} to submodular cost and supermodular weight functions. This generalization, however, does not work for general graphs since the length necessary to search a set of vertices is, in general, not a submodular function of that set.

	Several articles have also studied expanding search \emph{games}, where an adversary, or hider, chooses the target's location so as to maximize the expected search time. This then leads to a zero-sum game between the searcher and hider. In this setting, the probability distribution for the target's location cannot be taken as given, but results from an optimal (mixed) hider strategy. \cite{alpern2013mining} solve the game for trees and 2-edge-connected graphs, and \cite{alpern2019approximate} describe a method to approximate the game's value for arbitrary graphs. \cite{angelopoulos2019expanding} study a related problem where the searcher wants to minimize the expanding search ratio, defined as the maximum over all vertices of (i) the length that the search needs to reach a vertex and (ii) the shortest path between the root and that vertex. The authors show that the problem is NP-hard and find constant-factor approximation algorithms.

	\paragraph{Overview.}
	After formally defining the expanding search problem in Section~\ref{sec:prob_def}, we develop our branch-and-cut procedure in Section~\ref{sec:b&c}. Next, we describe our greedy approximation algorithm and local search in Section~\ref{sec:approx}, and report our findings from the computational experiments in Section~\ref{sec:comp_res}. Finally, we conclude and discuss potential future work in Section~\ref{sec:conclusion}.

	\section{Problem statement}\label{sec:prob_def}
	Let $G=(V,E)$ be a connected graph with root $r \in V$, a probability $p_v \in [0,1]$ associated to each vertex $v \in V$, and a length $\lambda_e \in \mathbb{R}_{> 0}$ to each edge $e \in E$.  Denote the number of non-root vertices by $n = \vert V \setminus \{r\} \vert$. We consider one searcher who has perfect information about the vertex probabilities and edge lengths. Finally, we assume that the target is hidden at exactly one of the non-root vertices such that $p_r = 0$ and $\sum_{v \in V} p_v = 1$.

	We define an \emph{expanding search} as a sequence of edges $\sigma = (e_1, \ldots, e_n)$ such that $r \in e_1$ and each edge $e_k$, with $k = 1, \ldots, n$, connects an unvisited vertex to one of the previously visited vertices. Hence, the edges $\{e_1, \ldots, e_k\}$ form a tree in $G$ for every $k = 1, \ldots, n$. Next, denote by $\lambda(\sigma,v) = \sum_{i = 1}^k \lambda_{e_{i}}$ the distance traveled until the search $\sigma = (e_1, \ldots, e_n)$ reaches vertex $v \in V$, where $e_k$ is the first edge in $\sigma$ with $v \in e_k$. The \emph{search cost} $c(\sigma)$ of an expanding search $\sigma$, defined as the expected distance traveled before finding the target, then equals
	\begin{equation}\label{eq:ex_distance}
	c(\sigma) = \sum_{v \in V} p_v \lambda(v,\sigma),
	\end{equation}
	and the \emph{expanding search problem}  asks to find an expanding search that minimizes this search cost. As mentioned above, \citet[Theorem~1]{averbakh2012flowtime} have shown that a problem equivalent to the expanding search problem is strongly NP-hard. Hence, unless $\text{P} = \text{NP}$, no polynomial-time algorithm that solves the expanding search problem exists and we need to rely on exponential-time (Section~\ref{sec:b&c}) or approximation (Section~\ref{sec:approx}) algorithms.

	\section{Branch-and-cut procedure}\label{sec:b&c}
	The mixed integer program that forms the base of our branch-and-cut procedure embeds a linear program to determine an optimal search sequence on a given tree (Section~\ref{sec:LP_trees}) within a mixed integer program that selects a tree (Section~\ref{sec:mip}).  We also introduce two classes of valid inequalities to strengthen the formulation (Section~\ref{sec:val_ineq}) and show how to separate these valid inequalities (Section~\ref{sec:sep_prob}).

	\subsection{Linear program for trees}\label{sec:LP_trees}		
	If $G$ takes the form of a rooted tree, then the searcher can only probe a vertex $v \in V \setminus \{r\}$ if all vertices on the unique path from $r$ to $v$ in the tree have already been searched. Let~$A$ denote the set of arcs obtained by directing all edges in the tree so as to reflect these precedence constraints. That is, the set~$A$ contains an arc~$(i,j)$ for each edge $\{i,j\} \in E$ such that vertex~$i$ is the immediate predecessor of vertex~$j$ in the tree's unique path from $r$ to $j$.  Several polynomial-time algorithms for the expanding search problem on a tree then arise as a special case of existing methods for sequencing with precedence constraints \citep{sidney1975decomposition,monma1979sequencing,queyranne1991single,correa2005single}.

	Define a decision variable~$\delta_{ij}$ for each pair of vertices $i,j \in V$ that indicates whether or not the search visits vertex $i$ before~$j$, and a decision variable~$z_i$ for each vertex~$i \in V$ that records the probability that the target has not been found before visiting vertex~$i$. Now consider the following linear program.
	\begin{align}
	& \text{[LP]} & & \text{min} & \sum_{(i,j) \in A} \lambda_{\{i,j\}} z_j  & & & \label{eq:lp_obj}\\
	& & & \text{s.t.} & \delta_{ij}  +  \delta_{ji} &=  1 & &\text{$\forall\, i,j \in V$,  $i \neq j$} \label{eq:lp_lo} \\
	& &  & & \delta_{ij}  +  \delta_{jk} + \delta_{ki} &\geq 1 & &\text{$\forall\, i,j,k \in V$,  $i \neq j \neq k \neq i$}\label{eq:lp_tran}  \\
	& &  & & \delta_{ij} & =  1 & & \text{$\forall\, (i,j) \in A$} \label{eq:lp_prec}\\
	& &  & & p_i + \sum_{j \in V \setminus \{i\}} p_j \delta_{ij} &= z_i   & &\text{$\forall\, i  \in V$} \label{eq:lp_prob}\\
	& &  & & \delta_{ij} & \geq 0 & &\text{$\forall\, i,j \in V$} \label{eq:lp_pos}
	\end{align}
	
	To see why this is a correct formulation for the expanding search problem on a tree, suppose first that all variables~$\delta_{ij}$ are either zero or one. Constraints~\eqref{eq:lp_lo} then make sure that vertex~$i$ is searched either before or after vertex~$j$. Together with Constraints~\eqref{eq:lp_tran} and~\eqref{eq:lp_prec}, this also enforces transitivity and the precedence constraints \citep{Potts1980}.
	Constraints~\eqref{eq:lp_prob}, in turn, state that the  target has not been found before reaching a certain vertex~$i$ with a probability equal to the total probability mass of all  vertices unvisited prior to arriving at that vertex~$i$. The objective function~\eqref{eq:lp_obj}, finally, combines the length of each arc with the probability that the searcher travels through this arc. 
	\cite{correa2005single} have shown that a linear program equivalent to~[LP], denoted \mbox{[P-LP]} in their article, has an optimal solution with all variables~$\delta_{ij}$ either zero or one. Hence, the linear program~[LP] provides a correct formulation for the expanding search problem on a tree.

	\subsection{Mixed integer program for general graphs}\label{sec:mip}
	We keep the notation of the previous section except for $A = \{(i,j), (j,i)\colon \{i,j\}\in E\}$, which now includes an arc for both possible directions to move over each edge. We say that the search uses arc $(i,j) \in A$ if the searcher travels from vertex~$i$ to vertex~$j$ through edge~$\{i,j\}$.
	In addition to the decision variables used in the linear program [LP], we introduce for each arc $(i,j) \in A$ a binary decision variable~$x_{ij}$ that indicates whether or not the search uses arc~$(i,j)$ and, similarly, a continuous decision variable~$y_{ij}$ that records the probability that the searcher did not find the target before using arc $(i,j)$. The following mixed integer program then constitutes a valid formulation for the expanding search problem on general graphs.
	\begin{align}
	& \text{[MIP]} & & \text{min} & \sum_{(i,j) \in A} \lambda_{\{i,j\}} y_{ij}  \label{eq:mip_obj}\\
	& & &\text{s.t.} & \text{\eqref{eq:lp_lo}, \eqref{eq:lp_tran}, \eqref{eq:lp_prob}, and  \eqref{eq:lp_pos}} & & &  \nonumber \\
	& & & & \sum_{i:\, (i,j) \in A} x_{ij} & =  1 & & \text{$\forall\, j \in V \setminus \{r\}$} \label{eq:mip_in}\\		
	& & & &  \sum_{i:\, (i,j) \in A} y_{ij} & =  z_j & & \text{$\forall\, j \in V \setminus \{r\}$} \label{eq:mip_prob}\\
	& & & &  0 \leq y_{ij} \leq x_{ij} & \leq \delta_{ij} & & \text{$\forall\, (i,j) \in A$} \label{eq:mip_connect}\\
	& & & &  x_{ij}  \in & \{0,1\} & & \text{$\forall\, (i,j) \in A$} \label{eq:mip_bin}
	\end{align}	
	
	Similarly as before, the objective function~\eqref{eq:mip_obj} combines the length of each edge with the probability that the searcher travels through that edge. 
	Constraints~\eqref{eq:mip_in} state that the search must reach every non-root vertex through exactly one arc, and Constraints~\eqref{eq:lp_lo}, \eqref{eq:lp_tran} and \eqref{eq:mip_connect} prevent cycles. Together with Constraints~\eqref{eq:mip_bin}, all feasible assignments to the variables $(x_{ij})_{(i,j) \in A}$  thus constitute a spanning tree of $G$ with root $r$.
	By Constraints~\eqref{eq:mip_prob}, the probability that the search reaches a vertex equals the total probability with which the search travels through an arc leading to that vertex.
	Combined with Constraints~\eqref{eq:mip_in} and~\eqref{eq:mip_connect}, this implies that for each vertex $j \in V \setminus \{r\}$ we have $y_{ij} = z_j$ if $(i,j)$ is the unique arc reaching vertex $j$ (and thus having $x_{ij} = 1$), and  $y_{ij} = 0$ otherwise. Hence, the $x$-variables select a tree, the $\delta$- and $z$-variables determine an optimal search sequence on this tree, and the $y$-variables translate this to the associated search cost.
	
	\subsection{Valid inequalities}\label{sec:val_ineq}
	The main source of weakness for the linear programming (LP) relaxation of [MIP] is that, in the presence of fractional values for the $x$-variables, there can be multiple arcs $(i,j)$ leading to the same vertex $j$ with a positive value for $y_{ij}$. The valid inequalities to be discussed in this section attempt to better coordinate the value of these $y$-variables.
	
	The first set of valid inequalities is inspired by the so-called ``directed cut model'' for spanning trees (see e.g.\ \citeauthor{magnanti1995optimal},~\citeyear{magnanti1995optimal}): 
	\begin{equation}\label{eq:flowtree_cuts}
	\sum_{(i,j) \in C(S)} y_{ij} \geq z_k \qquad \text{ for all } k \in V \setminus \{r\} \text{ and } S \subseteq V \setminus \{k\}\text{ with } r \in S.
	\end{equation}
	Here, the directed cut $C(S) = \{(i,j) \in A\colon i \in S,\, j \notin S\}$ collects all arcs starting in $S$ and ending in~$V \setminus S$. 
	The inequalities are valid because at least one arc in each directed cut $C(S)$ that separates a vertex~$k$ from the root should be traveled through with at least the probability to reach this vertex~$k$.

	Our second set of valid inequalities is similar to Constraints~\eqref{eq:flowtree_cuts}, except that now we directly use the probability parameters $(p_i)_{i \in V}$ instead of decision variables $(z_i)_{i \in V}$:
	\begin{equation}\label{eq:remprob_cuts}
	\sum_{(i,j) \in C(S)} y_{ij} \geq \sum_{i \in V \setminus S} p_i \qquad \text{ for all } S \subseteq V \text{ with } r \in S.
	\end{equation}
	These inequalities are valid since at least one arc in the directed cut~$C(S)$ should be traveled through with a probability equal to the probability mass of all vertices outside set $S$. For the special case where $S = V \setminus {j}$ for some $j \in V \setminus \{r\}$,  these inequalities can be strengthened to
	\begin{equation}\label{eq:inflow}
	z_j = \sum_{i:\, (i,j) \in A} y_{ij} \geq p_j + y_{jk}  \qquad \text{$\forall\, (j,k) \in A$.}
	\end{equation}
	Indeed, the searcher travels to vertex $j$ with a probability at least the sum of the probability that vertex~$j$ contains the object and that the searcher travels through arc~$(j,k)$.
	
	In the remainder of this paper, we refer to the two classes of Inequalities~\eqref{eq:flowtree_cuts} and~\eqref{eq:remprob_cuts}-\eqref{eq:inflow} as cuts~(C1) and~(C2), respectively.

	\subsection{Separation}\label{sec:sep_prob}
	Constraints~\eqref{eq:flowtree_cuts} and~\eqref{eq:remprob_cuts} contain exponentially many inequalities and it is therefore undesirable to include them all. Instead, we embed these inequalities in a cutting plane algorithm that iteratively solves the LP relaxation of [MIP], checks whether a violated valid inequality exists and, if so, adds it. This gives rise to a branch-and-cut procedure where we use the cutting plane algorithm to solve the LP relaxations and branch on whether or not to use an arc.

	For both Constraints~\eqref{eq:flowtree_cuts} and~\eqref{eq:remprob_cuts}, we can check whether there is a violated inequality by solving $O(n)$ minimum cut problems. Let $(x_{ij}^\star,y_{ij}^\star)_{(i,j) \in A}$, $(z_j^\star)_{j \in V}$ be a given solution to the LP relaxation and consider the directed graph $(V,A)$ in which arc $(i,j) \in A$ has capacity $y^\star_{ij}$. A violated inequality for Constraints~\eqref{eq:flowtree_cuts} then exists if and only if there is a vertex $k \in V \setminus \{r\}$ for which the minimum directed cut separating $r$ from $k$ has a capacity strictly smaller than $z^\star_k$. A similar approach allows to separate Constraints~\eqref{eq:remprob_cuts}, except that now we additionally introduce a dummy sink node $t$ in our directed graph and an extra arc $(i,t)$ for each node $i \in V$ with capacity $p_i$. Since $\sum_{i \in V \setminus S} p_i = 1 - \sum_{i \in S} p_i$ for each $S \subseteq V$, a violated inequality for Constraints~\eqref{eq:remprob_cuts} then exists if and only if the minimum directed cut has a capacity strictly smaller than one.

	\section{Approximation algorithm}\label{sec:approx}
	If the graph is a star with the root at its center, traveling to a non-root vertex~$v$ takes the same distance $\lambda_{\{r,v\}}$, independently of which vertices have been visited before. A straightforward pairwise interchange argument then shows that it is optimal to visit the vertices in non-increasing order of the ratio $p_v / \lambda_{\{r,v\}}$  \citep{smith1956various}. The \emph{search density} of a subgraph generalizes this idea to subgraphs by taking the ratio of (i) the total probability mass of all vertices in the subgraph and (ii) the total length of all edges in the subgraph. 
	
	\Citet{alpern2013mining} have shown that if $G$ is a tree, then there exists an optimal search that starts with searching the edges of a maximum density subtree. Although this approach may be suboptimal in general graphs \citep{alpern2013mining}, we will show that the resulting search cost is at most four times the optimal one. The problem of finding a subtree of maximum density in a general graph, however, is also strongly NP-hard. We develop a $1/2$-approximation for this problem (Section~\ref{sec:msd}) and show how this leads to an 8-approximation for the expanding search problem (Section~\ref{sec:descr_approx}). In Section~\ref{sec:local_search}, finally, we discuss a local search procedure to further improve the sequence found by the approximation algorithm.

	\subsection{The maximum density subtree problem}\label{sec:msd}
	Given an arbitrary graph $G^\prime$, let $V[G^\prime]$ and $E[G^\prime]$ collect the vertices and edges of~$G^\prime$, respectively.
	For a set of vertices $V^\prime \subseteq V$ of edges $E^\prime \subseteq E$, we use the notation $p(V^\prime) = \sum_{v \in V^\prime} p_v$ and $\lambda(E^\prime) = \sum_{e \in E^\prime} \lambda_e$. Let $\mathcal{T}(G)$ collect all subtrees  with root~$r$ of the graph $G$ and, given a subtree $T \in \mathcal{T}(G)$, denote $p(T) = p(V[T])$ and $\lambda(T) = \lambda(E[T])$. The \emph{search density}, or simply \emph{density}, of a subtree $T \in \mathcal{T}(G)$ with $\lambda(T) > 0$ is then defined as 
	\begin{equation*}
	\rho(T) = \frac{p(T)}{\lambda(T)},
	\end{equation*} 
	and the \emph{maximum density subtree problem} (MDSP) asks to find a tree $T^\star \in \mathcal{T}(G)$ that maximizes this density:
	\begin{equation*}	
	\rho^\star = \rho(T^\star) = \max_{T \in \mathcal{T}(G)} \rho(T).
	\end{equation*}

	The MDSP can be solved efficiently by a dynamic program in case $G$ is a tree \citep{alpern2013mining}, but it is strongly NP-hard in general \citep[Theorem~8]{lau2006finding}. \Citet{kao2013density} describe both exact and approximation algorithms for different variants of the MDSP, but, to the best of our knowledge, we are the first to develop an approximation algorithm for the MDSP as defined above.
	
	As a subroutine to our approximation algorithm, we consider a sequence of instances for the prize-collecting Steiner tree (PCST) problem. Given a connected graph $G = (V,E)$ with root $r$, edge lengths $(\lambda_e)_{e \in E}$, and vertex penalties $(p)_{v \in V}$, the PCST problem asks to find a tree $T \in \mathcal{T}(G)$ that minimizes $\lambda(T) + p(V\setminus V[T])$. \Citet{goemans1995general} have developed an approximation algorithm (henceforth called the GW algorithm) for the PCST problem with the following guarantee:   
	\begin{theorem}[\citeauthor{goemans1995general}, \citeyear{goemans1995general}]\label{th:GW}
		The GW algorithm yields a tree $T \in \mathcal{T}(G)$ with
		\begin{equation*}
		\lambda(T) + \left(2 - \frac{1}{n}\right)p(V\setminus V[T]) \leq \left(2 - \frac{1}{n}\right) (\lambda(T^\prime) + p(V\setminus V[T^\prime]))
		\end{equation*}
		for each $T^\prime \in \mathcal{T}(G)$.
	\end{theorem}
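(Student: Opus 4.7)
The plan is to prove Theorem~\ref{th:GW} via the primal-dual schema of \citet{goemans1995general}. I would first write down the cut-based LP relaxation of PCST, introducing variables $x_e \geq 0$ for each edge $e \in E$ and $z_U \geq 0$ for each $U \subseteq V \setminus \{r\}$, and minimizing $\sum_e \lambda_e x_e + \sum_U p(U)\, z_U$ subject to the covering constraints $\sum_{e \in \delta(S)} x_e + \sum_{U \supseteq S} z_U \geq 1$ for every $S \subseteq V \setminus \{r\}$, where $\delta(S)$ denotes the edge boundary of $S$. The LP dual assigns a moat variable $y_S \geq 0$ to each $S \subseteq V \setminus \{r\}$, maximizes $\sum_S y_S$, and is subject to an edge packing constraint $\sum_{S\,:\,e \in \delta(S)} y_S \leq \lambda_e$ together with a penalty packing constraint $\sum_{S \subseteq U} y_S \leq p(U)$. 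Weak duality then yields $\sum_S y_S \leq \mathrm{OPT}$ for any feasible $y$.

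Next, I would formalize the GW algorithm as a continuous moat-growing procedure. Maintain a forest $F$, initially the isolated vertices, and call a component $C \subseteq V \setminus \{r\}$ of $F$ \emph{active} if $C$ does not yet have its penalty constraint tight. Raise $y_C$ at unit rate for every active component simultaneously, stopping whenever either an edge constraint tightens (merging two components, or attaching one to the component of $r$ and deactivating it) or a penalty constraint tightens (deactivating the component directly). When no active components remain, run a reverse-delete pass that strips every edge whose removal does not disconnect some vertex of a component that was active at termination from the component of $r$, and let $T \in \mathcal{T}(G)$ be the final tree.

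For the analysis, I would charge each cost term separately against $\sum_S y_S$. By edge tightness, $\lambda(T) = \sum_S |\delta(S) \cap E[T]|\, y_S$, which, after rewriting dual growth as a time integral, becomes $\int_0^{\infty} \sum_{C\ \text{active at time }t} d_T(C)\, dt$ where $d_T(C) := |\delta(C) \cap E[T]|$. By penalty tightness applied to every dead component $D$, the quantity $p(V \setminus V[T])$ is bounded by the total dual mass placed inside dead components. The core combinatorial step is the Goemans--Williamson degree-averaging lemma: at each snapshot with $N_t \geq 2$ active components, $\sum_{C\ \text{active}} d_T(C) \leq 2 N_t - 2$, as a consequence of the reverse-delete property (no active component is a redundant leaf of $T$) and a handshake count over the currently contracted forest. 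Integrating, combining with the penalty bound, and amortizing the per-iteration $-2$ slack over the at most $n$ distinct active components that ever appear delivers $\lambda(T) + (2 - 1/n)\, p(V \setminus V[T]) \leq (2 - 1/n) \sum_S y_S$, and a final application of weak duality gives the theorem.

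The hard part is the last sharpening from factor $2$ to $(2 - 1/n)$. A naive per-iteration bookkeeping only yields a factor of $2$ on both the objective and the penalty term; recovering the additional $1/n$ improvement requires the amortized redistribution of the per-iteration $-2$ slack between the edge and penalty contributions, which in turn rests on a tight accounting of how edge-tightening and penalty-tightening events interact throughout the run. This amortized degree-averaging step is the most delicate part of the argument, and the point at which one has to depart from a clean per-iteration bound in favor of a global analysis.
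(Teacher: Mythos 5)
The paper does not prove this statement at all: Theorem~\ref{th:GW} is imported verbatim from \citet{goemans1995general} as a black box, and the only thing the paper uses is the stated inequality (inside the parametric search of Proposition~\ref{prop:MDSP_approx}). So there is no internal proof to compare against; what you have written is a reconstruction of the Goemans--Williamson primal--dual analysis, and at the architectural level it is the right one: the cut-based LP with penalty variables $z_U$, the moat-growing dual, charging $\lambda(T)$ to edge-tightness and $p(V\setminus V[T])$ to penalty-tightness of dead components, and a degree-counting lemma over the contracted forest.

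Two details of your sketch are off, however, and the second one matters for how the proof actually closes. First, the leaf-exclusion property guaranteed by the pruning phase is about \emph{deactivated} components, not active ones: a component that dies because its penalty constraint goes tight cannot survive as a degree-one node of the contracted final tree (its pendant edge would have been pruned), so every inactive non-root node has degree $0$ or degree at least $2$; active components may well be leaves. It is this fact, combined with the handshake bound $\sum_C d_H(C)\le 2(|H|-1)$ for a forest, that yields $\sum_{C\ \text{active}} d_H(C)\le 2N_t-2$. Second, the sharpening from $2$ to $2-\tfrac1n$ does \emph{not} require a global amortization across iterations. Since the number of active components at any time satisfies $N_t\le n$ (with $n=|V\setminus\{r\}|$ as in the paper), the per-iteration slack already gives $2N_t-2\le \left(2-\tfrac1n\right)N_t$, and the dead components contribute degree zero, so the inequality
\begin{equation*}
\sum_{C\ \text{active, surviving}} d_H(C)+\left(2-\tfrac1n\right)\,\bigl|\{C\ \text{active, dying}\}\bigr|\ \le\ \left(2-\tfrac1n\right)N_t
\end{equation*}
holds at every snapshot and simply integrates. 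The step you flag as the delicate departure from per-iteration bookkeeping is in fact the easy part of the standard argument; as written, your ``core combinatorial step'' is stated for the wrong class of components and your account of where the $1/n$ comes from would not assemble into a proof without being corrected along these lines.
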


	Our approximation algorithm for the MDSP consists of a parametric search where we iteratively guess a value for the maximum density~$\rho^\star$ and evaluate our guess by employing the GW-algorithm. In particular, given a constant $\varepsilon > 0$, our \emph{parametric search} produces a subtree $T^\text{s} \in \mathcal{T}(G)$ as follows:
	
	\begin{enumerate}
		\item Take arbitrary $T^\text{s} \in \mathcal{T}(G)$, and initialize $\alpha \gets \left(2 - \frac{1}{n}\right) p(T^\text{s})/\lambda(T^\text{s})$ and $\beta \gets \max \{p_v/\lambda_{\{v,w\}} \colon \{v,w\}\in E\}$ 
		\item While $\beta > \left(1+\varepsilon \right) \alpha$, do
		\begin{enumerate}
			\item $\rho \gets (\alpha + \beta) / 2$
			\item let $T$ be the subtree obtained by applying the GW algorithm on graph~$G$ with lengths $(\rho \lambda_e)_{e\in E}$ and penalties $(p_v)_{v \in V}$
			\item if $\left(2 - \frac{1}{n}\right) p(T) \leq \rho \lambda(T)$, let $\beta \gets \rho$;
			else, let $\alpha \gets \left(2 - \frac{1}{n}\right) \rho(T)$ and $T^\text{s} \gets T$
		\end{enumerate}
		\item Return $T^\text{s}$.
	\end{enumerate}

	\begin{proposition}\label{prop:MDSP_approx}
		For each $\varepsilon > 0$, the maximum density~$\rho^\star$ is at most $\left(\left(1+\varepsilon \right)\left(2 - \frac{1}{n}\right)\right)$ times the density $\rho(T^s)$ of the tree $T^s$ obtained by the parametric search.
	\end{proposition}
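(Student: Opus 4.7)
The plan is to exhibit two invariants maintained by the while loop and conclude from the termination condition. Let $T^\star \in \mathcal{T}(G)$ be any optimal MDSP tree, so $\rho^\star = \rho(T^\star)$. I claim that at initialization and at the end of every iteration,
\begin{equation*}
\alpha = \left(2 - \tfrac{1}{n}\right)\rho(T^s) \qquad \text{and} \qquad \beta \geq \rho^\star.
\end{equation*}
Once these are in hand, the exit condition $\beta \leq (1+\varepsilon)\alpha$ yields $\rho^\star \leq \beta \leq (1+\varepsilon)(2-1/n)\rho(T^s)$, which is the claim.

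The first invariant is immediate by inspection: both the initialization and the update in step~2(c) modify $\alpha$ and $T^s$ in lockstep, always maintaining $\alpha = (2-1/n) p(T^s)/\lambda(T^s)$.

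For the second invariant I would handle the base case and preservation separately. At initialization, orienting the edges of $T^\star$ away from $r$ assigns each non-root vertex $v_i$ of $T^\star$ a unique parent edge $e_i$; since $p_r = 0$, the mediant inequality gives
\begin{equation*}
\rho^\star \;=\; \frac{\sum_i p_{v_i}}{\sum_i \lambda_{e_i}} \;\leq\; \max_i \frac{p_{v_i}}{\lambda_{e_i}} \;\leq\; \beta.
\end{equation*}
For preservation under the update $\beta \gets \rho$, I would argue the contrapositive: if $\rho < \rho^\star$, then the triggering condition $(2-1/n)p(T) \leq \rho \lambda(T)$ must fail. Assuming $\rho < \rho^\star$, apply Theorem~\ref{th:GW} to the PCST instance with edge lengths $(\rho \lambda_e)_{e \in E}$ and penalties $(p_v)_{v \in V}$, taking $T^\prime = T^\star$; using $p(V) = 1$ the guarantee rearranges to
\begin{equation*}
\rho\lambda(T) - \left(2 - \tfrac{1}{n}\right) p(T) \;\leq\; \left(2 - \tfrac{1}{n}\right)\bigl(\rho\lambda(T^\star) - p(T^\star)\bigr).
\end{equation*}
The right-hand side is strictly negative because $\rho\lambda(T^\star) < p(T^\star)$, so $(2-1/n)p(T) > \rho\lambda(T)$, contradicting the triggering condition.

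The main obstacle is the last step: lining up the inequality produced by the GW guarantee on the \emph{scaled} PCST instance with the exact branching condition of the parametric search, so that the contrapositive of ``$\rho \geq \rho^\star$'' matches perfectly. Everything else is bookkeeping: the degenerate case $\lambda(T) = 0$ forces $p(T) = 0$ and so falls into the $\beta$-branch (keeping $\rho(T^s)$ well-defined), and a standard halving argument shows the loop terminates within $O(\varepsilon^{-1}\log(\beta_0/\alpha_0))$ iterations, but neither is required for the stated approximation bound.
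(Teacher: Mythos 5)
Your proof is correct and takes essentially the same route as the paper's: the same two loop invariants ($\alpha = (2-\frac{1}{n})\rho(T^s)$ and $\beta \geq \rho^\star$), the same mediant-inequality base case, and the same application of Theorem~\ref{th:GW} to the $\rho$-scaled instance. The only cosmetic difference is that you argue preservation of $\beta \geq \rho^\star$ by contraposition with $T^\prime = T^\star$, whereas the paper directly deduces $\rho\lambda(T^\prime) \geq p(T^\prime)$ for all $T^\prime$; these are logically equivalent.
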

	\begin{proof}
	Denote by~$\alpha_i$,~$\beta_i$, and~$T^\text{s}_i$ the values for~$\alpha$,~$\beta$, and~$T^\text{s}$ at the beginning of iteration~$i$ in the while loop, and let $T^\star$ be a tree with maximum density~$\rho^\star$ . We claim that for each iteration~$i\geq 1$ it holds that $\rho^\star \leq \beta_i$ and $\alpha_i = \left(2 - \frac{1}{n}\right) \rho(T_i^\text{s})$. Since $\beta \leq \left(1+\varepsilon \right) \alpha$ after the final iteration, this implies
	\begin{equation*}
	\rho^\star \leq \left(1+\varepsilon \right) \alpha = \left(1+\varepsilon \right) \left(2 - \frac{1}{n}\right) \rho(T^\text{s}),
	\end{equation*}
	which would then prove the result.
	
	We prove the claim by induction. By definition, $\alpha_1 = \left(2 - \frac{1}{n}\right) \rho(T_1^\text{s})$. Since for every two sequences $a_1, \ldots, a_k$ and $b_1, \ldots, b_k$ of respectively non-negative and positive numbers it holds that 
	\begin{equation*}
	\max_{j = 1, \ldots, k} \frac{a_j}{b_j} \geq \frac{\sum_{j = 1}^k a_j}{\sum_{j = 1}^k b_j},
	\end{equation*}
	we also have that $\beta_1 \geq \rho^\star$. 
	
	Next, take an arbitrary iteration~$i \geq 1$, and assume that $\beta_i \geq \rho^\star $ and $\alpha_i = \left(2 - \frac{1}{n}\right) \rho(T_i^\text{s})$. Let~$T$ be the subtree obtained by the GW algorithm and distinguish between two cases. Firstly, if
	\begin{equation*}
	\left(2 - \frac{1}{n}\right) p(T) \leq \rho \lambda(T),
	\end{equation*}
	then adding $\left(2 - \frac{1}{n}\right)p(V \setminus V[T])$ to both sides of this inequality yields that
	\begin{equation*}
	\left(2 - \frac{1}{n}\right) p(V) \leq \rho \lambda(T) + \left(2 - \frac{1}{n}\right) p(V \setminus V[T]).
	\end{equation*}
	Hence, by Theorem~\ref{th:GW}, we have for each $T^\prime \in \mathcal{T}(G)$ that
	\begin{equation*}
	\left(2 - \frac{1}{n}\right) p(V) \leq \left(2 - \frac{1}{n}\right) \left(\rho \lambda(T^\prime) +  p(V \setminus V[T^\prime])\right)
	\end{equation*}
	or, equivalently, that $\rho\lambda(T^\prime) \geq p(T^\prime)$. Thus, our guess~$\rho$ for the density was too high, and we obtain that $\beta_{i + 1} = \rho \geq \rho^\star$, while $\alpha_{i + 1} = \alpha_i$ and $T_{i + 1}^\text{s} = T_{i}^\text{s}$.
	Alternatively, if $\left(2 - \frac{1}{n}\right) p(T) > \rho \lambda(T)$, then
	\begin{equation*}
	\left(2 - \frac{1}{n}\right) p(V) > \rho \lambda(T) + \left(2 - \frac{1}{n}\right) p(V \setminus V[T]),
	\end{equation*}
	which yields that $\left(2 - \frac{1}{n}\right) \rho(T) > \rho$. Thus, our guess~$\rho$ for the density was too low, and we have $\alpha_{i + 1} = \left(2 - \frac{1}{n}\right) \rho(T_{i+1}^\text{s}) > \rho $ and $\beta_{i + 1} = \beta_{i}$. Together with the induction hypothesis, we obtain in both cases that $\rho^\star \leq \beta_{i+1}$ and $\alpha_{i+1} = \left(2 - \frac{1}{n}\right) \rho(T_{i+1}^\text{s})$, which proves the claim.
	\end{proof}
	
	As summarized by the next result, by choosing $\varepsilon = 1/(2n - 1)$, we obtain a $1/2$-approximation algorithm for the MDSP.\@ The associated running time consists of the one for the GW algorithm multiplied with the number of iterations in the while loop. In particular, \citet[Theorem~28]{hegde2015nearly} discuss how the GW algorithm can be implemented to run in time $O(\vert E \vert \log(n)\log(M))$  and, for every $\varepsilon > 0$, the number of iterations in the while loop is $O(\log(M/\varepsilon))$.
	\begin{corollary}\label{col:MDSP_half}
		For $\varepsilon = 1/(2n - 1)$, the parametric search is a $1/2$-approximation algorithm for the maximum density subtree problem that runs in time\\ $O(\vert E \vert\log(n)\log(M)\log(nM))$, with $M$ the largest input number required to describe the instance. 
	\end{corollary}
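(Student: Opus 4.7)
The plan has two parts: verifying the approximation ratio and bounding the running time; the ratio part is a direct calculation plugged into Proposition~\ref{prop:MDSP_approx}, while the running-time part requires an argument about the convergence of the bisection-like update.

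For the approximation guarantee, I would simply substitute $\varepsilon = 1/(2n-1)$ into the bound of Proposition~\ref{prop:MDSP_approx}: the product telescopes as
\begin{equation*}
\left(1 + \frac{1}{2n-1}\right)\left(2 - \frac{1}{n}\right) = \frac{2n}{2n-1} \cdot \frac{2n-1}{n} = 2,
\end{equation*}
so $\rho^\star \leq 2\, \rho(T^\text{s})$, which is exactly the statement that the parametric search is a $1/2$-approximation for the MDSP.

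For the running time, I would bound the number of iterations of the while loop by tracking the additive gap $\beta - \alpha$. In the branch where $(2 - 1/n)\, p(T) \leq \rho \lambda(T)$, the new $\beta$ equals $\rho = (\alpha + \beta)/2$, so $\beta - \alpha$ is divided by two. In the complementary branch, the new $\alpha$ equals $(2 - 1/n)\rho(T)$, which by the computation in the proof of Proposition~\ref{prop:MDSP_approx} is strictly greater than $\rho = (\alpha + \beta)/2$; hence in this branch, too, the gap $\beta - \alpha$ is at least halved. Since $\alpha$ is nondecreasing and bounded below by its initial positive value $\alpha_1$, the termination condition $\beta \leq (1+\varepsilon)\alpha$ is reached once $(\beta_1 - \alpha_1)/2^k \leq \varepsilon\, \alpha_1$. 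The initial ratio $\beta_1/\alpha_1$ is polynomially bounded in $n$ and $M$, so this yields $O(\log(M/\varepsilon)) = O(\log(nM))$ iterations. Multiplying by the $O(|E|\log(n)\log(M))$ cost of a single GW call from \citet{hegde2015nearly} gives the stated running time.

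The main obstacle is the uniform treatment of the two branches in the iteration-count analysis: the second branch updates $\alpha$ by an unconstrained amount $(2 - 1/n)\rho(T)$, and I need to exploit the specific inequality $(2 - 1/n)\rho(T) > \rho$ established inside the proof of Proposition~\ref{prop:MDSP_approx} to argue that this amount is still at least $(\alpha+\beta)/2$, so that $\beta - \alpha$ contracts geometrically just as in ordinary bisection. Once this observation is in hand, the rest of the proof is bookkeeping on bounds in terms of $M$ and $n$.
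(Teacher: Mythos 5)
Your proposal is correct and follows essentially the same route as the paper: substitute $\varepsilon = 1/(2n-1)$ into Proposition~\ref{prop:MDSP_approx} to get the factor $2$, and multiply the per-iteration cost of the GW algorithm by an $O(\log(M/\varepsilon)) = O(\log(nM))$ bound on the number of bisection iterations. The paper merely asserts the iteration bound, whereas you supply the (correct) justification that the gap $\beta-\alpha$ at least halves in both branches because the updated $\alpha$ in the second branch satisfies $(2-\tfrac{1}{n})\rho(T) > \rho = (\alpha+\beta)/2$.
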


	\subsection{Greedy algorithm for the expanding search problem}\label{sec:descr_approx}
	To represent the graph structure that remains after having searched a set of vertices, we use the concept of vertex contraction.
	Given a graph $G = (V,E)$ and a subset $S \subseteq V$ with $r \in S$, denote the graph obtained by contracting the vertices in $S$ to the root by $G/S$. For each vertex $w \in V \setminus S$, let $C(w,S) = \{\{v,w\} \in E\colon v\in S\}$ collect all edges in $E$ connecting vertex~$w$ to a vertex in $S$. We then have that $V[G/S] = (V \setminus S) \cup \{r\}$, and that $E[G/S]$ consists of all edges $\{v,w\} \in E$ with $v,w \in V \setminus S$ and all edges $\{r,w\}$ with $w \in V \setminus S$ and non-empty $C(w,S)$.
	For each edge $e \in E[G/S]$ in the contracted graph, we define a length $\lambda^S_e$ as follows. If the edge is not adjacent to the root, i.e.\ if $r \notin e$, then we take the original edge length $\lambda^S_e = \lambda_e$. If the edge is of the form $e = \{r, w\}$ for some $w \in V \setminus S$, in turn, then we set $\lambda^S_e$ equal to the minimal length  to reach vertex~$w$ from the set~$S$ in the original graph, i.e.\ $\lambda^S_e = \min_{e^\prime \in C(w,S)} \lambda_{e^\prime}$.

	Our \emph{greedy algorithm} assumes that an approximation algorithm for the MDSP is available and uses this to produce a search sequence $\sigma^\text{g}$ as follows:
	\begin{enumerate}
		\item Initialize $i\gets 1$ and $S \gets \{r\}$
		\item While a vertex $v \in V \setminus S$ with $p_v > 0$ remains, do
		\begin{enumerate}
			\item let $T_i$ be the subtree obtained by applying the approximation algorithm for the MDSP on graph $G/S$ with probabilities $(p_v)_{v \in V[G/S]}$ and lengths $(\lambda^S_e)_{e \in E[G/S]}$
			\item let $\sigma_i$ be an arbitrary expanding search on tree $T_i$
			\item increment $i$ and let $S \gets S \cup V[T_i]$
		\end{enumerate}
		\item Let $\sigma^\text{g}$ be the expanding search implied by $(\sigma_1, \sigma_2, \ldots)$, where every edge adjacent to the root in the contracted graph is replaced with a corresponding length-minimizing edge in the original graph.
	\end{enumerate}

	Suppose we have an algorithm for the MDSP that can find a subtree with a density at most~$\alpha \geq 1$ times smaller than the maximum density. Employing this algorithm within our greedy procedure then leads to a search sequence with a cost at most $4\alpha$ times the optimal search cost. The proof follows a similar structure as the one of \citet[Theorem~4]{feige2004approximating}.
	
	\begin{theorem}\label{th:greedy}
		Given a $1/\alpha$-approximation algorithm for the maximum density subtree problem, the greedy algorithm is a $4\alpha$-approximation algorithm for the expanding search problem.
	\end{theorem}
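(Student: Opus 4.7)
The plan is to charge the greedy cost $c(\sigma^\text{g})$ against an optimal search $\sigma^\star$ via a phase-by-phase density argument in the spirit of \cite{feige2004approximating}. Let $T_1, \ldots, T_K$ be the trees produced by the greedy algorithm and set $S_j = \{r\} \cup V[T_1] \cup \cdots \cup V[T_j]$, $L_j = \lambda(T_j)$, $P_j = p(T_j)$, and $Q_j = 1 - p(S_{j-1}) = \sum_{k \geq j} P_k$ for the remaining mass at the start of phase $j$. Let $q^\star(t)$ denote the remaining mass under $\sigma^\star$ at time $t$, and define $\tau^\star(q) := \inf\{t \geq 0 : q^\star(t) \leq q\}$; by the layer-cake identity, $c(\sigma^\star) = \int_0^1 \tau^\star(q)\, dq$.

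First I would establish the elementary phase-wise bound $c(\sigma^\text{g}) \leq \sum_j L_j Q_j$: any vertex $v$ reached in phase $j$ satisfies $\lambda(\sigma^\text{g}, v) \leq \sum_{k \leq j} L_k$, so exchanging the order of summation gives $c(\sigma^\text{g}) \leq \sum_j P_j \sum_{k \leq j} L_k = \sum_k L_k Q_k$.

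The heart of the argument is a lower bound on the maximum density $\rho^\star_j$ of a subtree of $G/S_{j-1}$. For any $t \geq 0$, let $R(t) \subseteq G$ be the subtree traversed by $\sigma^\star$ up to time $t$; then $\lambda(R(t)) \leq t$ and $p(V[R(t)]) \geq 1 - q^\star(t)$. I would argue that contracting $S_{j-1}$ inside $R(t)$ and retaining, for each connected component of $R(t) \setminus S_{j-1}$, only one minimum-length attachment edge to the contracted root produces an element of $\mathcal{T}(G/S_{j-1})$ with length at most $t$ (here the $\min$-length definition of contracted root edges in $G/S_{j-1}$ is essential, since the original parent edge of each component leader is one of the candidates in the $\min$) and mass at least $Q_j - q^\star(t)$. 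Hence $\rho^\star_j \geq (Q_j - q^\star(t))/t$ for every $t \geq 0$; evaluating at $t = \tau^\star(Q_j/2)$ yields $\rho^\star_j \geq Q_j / (2\tau^\star(Q_j/2))$, and the $1/\alpha$-approximation guarantee gives $L_j = P_j/\rho(T_j) \leq 2\alpha\, \tau^\star(Q_j/2)\, P_j / Q_j$.

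Combining the previous two parts, $c(\sigma^\text{g}) \leq 2\alpha \sum_j P_j\, \tau^\star(Q_j/2)$. Since $\tau^\star$ is non-increasing and $P_j = Q_j - Q_{j+1}$, for $q \in [Q_{j+1}, Q_j]$ we have $\tau^\star(q/2) \geq \tau^\star(Q_j/2)$, so
\[ \sum_j P_j\, \tau^\star(Q_j/2) \leq \sum_j \int_{Q_{j+1}}^{Q_j} \tau^\star(q/2)\, dq = \int_0^1 \tau^\star(q/2)\, dq = 2 \int_0^{1/2} \tau^\star(u)\, du \leq 2\, c(\sigma^\star), \]
yielding $c(\sigma^\text{g}) \leq 4\alpha\, c(\sigma^\star)$. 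I expect the technical subtlety to lie in the density lower bound: one must check carefully that the construction above really produces a subtree of $G/S_{j-1}$ with the claimed length bound, which requires tracking how edges of $R(t)$ transform under the contraction and treating forest components consistently. The specific choice of half-threshold $Q_j/2$ inside $\tau^\star$ is what yields the factor $4$; the same argument with threshold $\theta Q_j$ instead gives an $\alpha/(\theta(1-\theta))$-approximation, which is optimized at $\theta = 1/2$.
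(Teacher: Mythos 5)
Your proposal is correct and follows essentially the same route as the paper's proof: your phase-wise bound $c(\sigma^{\text{g}})\leq\sum_j L_jQ_j$ is the paper's Lemma~\ref{lem:greedybound}, your density lower bound via the contracted prefix of $\sigma^\star$ is exactly Lemma~\ref{lem:densitybound} applied with $\bar\rho=\alpha\, p(T_j)/\lambda^{(j)}$, and your integral comparison using the generalized inverse $\tau^\star$ with the half-threshold $Q_j/2$ is an analytic rendering of the paper's two-diagram shrink-and-fit argument in Lemma~\ref{lem:geometric} (the substitution $u=q/2$ and restriction to $[0,1/2]$ correspond precisely to halving the column widths and right-aligning the shrunk diagram). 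The one technical point you flag --- that contracting $S_{j-1}$ in the optimal prefix tree and keeping one minimum-length attachment edge per component yields a member of $\mathcal{T}(G/S_{j-1})$ of no greater length --- is indeed the step the paper handles (somewhat more tersely) inside Lemma~\ref{lem:densitybound}, and you treat it correctly.
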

	\begin{proof}
	Let $\sigma^\star$ be an optimal expanding search sequence, $\sigma^\text{g}$ the sequence obtained by the greedy algorithm, and $m$ the number of iterations in this greedy algorithm. For each iteration~$i$, let $R_i = \bigcup_{j = i}^{m} V[T_j] \setminus \{r\}$ collect the unvisited vertices prior to iteration $i$, denote by $\lambda^{(i)} = \lambda^{V \setminus R_i}(T_i)$ the total length of the tree obtained in iteration $i$, and call $\varphi_i = p(R_i)\lambda^{(i)}/p(T_i)$ the \emph{price} of iteration~$i$.
	
	To show that $c(\sigma^\text{g}) \leq 4\alpha c(\sigma^\star)$, we proceed in three steps. Firstly, Lemma~\ref{lem:greedybound} gives an upper bound on the greedy algorithm's search cost in terms of the weighted sum of prices. Next, Lemma~\ref{lem:densitybound} shows that if the searcher has already visited a set $S$ of vertices and if $\bar{\rho}$ is an upper bound on the density in the remaining graph $G/S$, then no tree in the original graph can search more than $\tau\bar{\rho}$ probability mass of the unvisited vertices within length~$\tau$. Lemma~\ref{lem:geometric}, finally, uses this result together with a geometric argument to show that the upper bound for the greedy algorithm's search cost is at most four times the optimal search cost.

	\begin{lemma}\label{lem:greedybound}
		$c(\sigma^\text{g}) \leq \sum_{i = 1}^m p(T_i) \varphi_i.$
	\end{lemma}
	\begin{proof}
	By Equation~\eqref{eq:ex_distance} and the construction of $\sigma^\text{g}$,
	\begin{equation*}
	c(\sigma^\text{g}) = \sum_{v \in V} p_v \lambda(v, \sigma^\text{g}) = \sum_{i = 1}^m \sum_{v \in V[T_i]} p_v \lambda(v, \sigma^\text{g}) \leq \sum_{i = 1}^m \sum_{v \in V[T_i]} p_v \sum_{j=1}^i\lambda^{(j)}.
	\end{equation*}
	Rearranging terms then yields that
	\begin{equation*}
	c(\sigma^\text{g}) \leq \sum_{i=1}^m \lambda^{(i)} \sum_{j=i}^m p(T_i) = \sum_{i = 1}^m \lambda^{(i)} p(R_i),
	\end{equation*}
	and the result follows since the definition of $\varphi_i$ implies that $\lambda^{(i)} p(R_i) = p(T_i)\varphi_i$.
	\end{proof}

	\begin{lemma}\label{lem:densitybound}
		Given set of vertices $S \subset V $, let $\bar{\rho}$ be such that $\bar{\rho} \geq p(T)/\lambda^S(T)$ for every $T \in \mathcal{T}(G/S)$. For each $\tau > 0$ and $T \in \mathcal{T}(G)$ with $\lambda(T) \leq \tau$ then holds that $p(V[T] \setminus S) \leq \tau \bar{\rho}$.
	\end{lemma}
	\begin{proof}
	For arbitrary $\tau > 0$ and $T \in \mathcal{T}(G)$ with $\lambda(T) \leq \tau$, consider the graph $T/S$ obtained from~$T$ by contracting the vertices of $S$ to the root. Since $T/S \in \mathcal{T}(G/S)$, the definition of $\bar{\rho}$ together with the observation that $\lambda^S(T/S) \leq \lambda(T) \leq \tau$ yields that
	\begin{equation*}
	p(V[T]\setminus S) = p(T/S) \leq \bar{\rho} \lambda^S(T/S) \leq  \tau \bar{\rho}. \qedhere
	\end{equation*}
	\end{proof}

	\begin{lemma}\label{lem:geometric}
		$\sum_{i = 1}^m p(T_i) \varphi_i \leq 4\alpha c(\sigma^\star).$
	\end{lemma}
	\begin{proof}
	The proof relies on a geometric argument and proceeds in two steps. 
	Firstly, we construct two diagrams whose surfaces equal $c(\sigma^\star)$ and $\sum_{i = 1}^m p(T_i) \varphi_i$, respectively. Next, we show that if we shrink the latter diagram by factor $4\alpha$, then it fits within the former. Figure~\ref{fig:approx} illustrates this idea.

	\colorlet{optcol}{gray!60!white}
	\colorlet{greedycol}{lightgray!60!white}
	
	\pgfmathsetlengthmacro\punit{7pt}
	\pgfmathsetlengthmacro\lunit{0.85987*\punit}
	
	\pgfmathsetlengthmacro\pmax{16*\punit}
	\pgfmathsetlengthmacro\lmax{11.5*\lunit}

	%probability and length of firstly visited node + for greedy
	\pgfmathsetlengthmacro\pa{3*\punit}
	\pgfmathsetlengthmacro\la{2*\lunit}
	
	\pgfmathsetlengthmacro\pag{2*\punit}
	\pgfmathsetlengthmacro\prag{7.5*\lunit}
	
	%accumulative probability and length of secondly visited node + for greedy
	\pgfmathsetlengthmacro\pb{\pa + 2*\punit}
	\pgfmathsetlengthmacro\lb{\la + 1*\lunit}
	
	\pgfmathsetlengthmacro\pbg{\pag + 9*\punit}
	\pgfmathsetlengthmacro\prbg{8.67*\lunit}
	
	%accumulative probability and length of thirdly visited node + for greedy
	\pgfmathsetlengthmacro\pc{\pb + 6*\punit}
	\pgfmathsetlengthmacro\lc{\lb + 4*\lunit}
	
	\pgfmathsetlengthmacro\pcg{\pbg + 4*\punit}
	\pgfmathsetlengthmacro\prcg{3*\lunit}
	
	%accumulative probability and length of fourthly visited node	
	\pgfmathsetlengthmacro\pd{\pc + 4*\punit}
	\pgfmathsetlengthmacro\ld{\lc + 3*\lunit}
	
	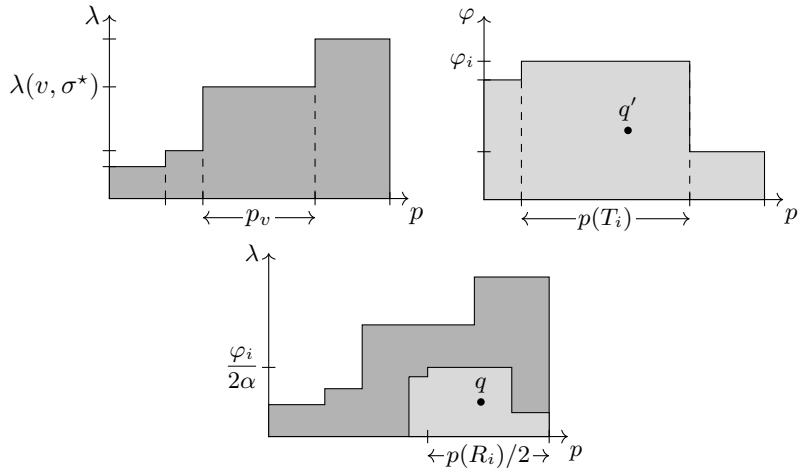
\begin{figure}
		\caption{Geometric argument to show that $\sum_{i = 1}^m p(T_i) \varphi_i \leq 4\alpha c(\sigma^\star)$.}\label{fig:approx}
		\centering
		\begin{tikzpicture}[font=\mdseries]
		\draw[->] (-0.2pt,0) -- (\pmax,0);
		\draw[->] (0,-0.2pt) -- (0,\lmax);	
		\node at (\pmax,0) [anchor=north, xshift = 3pt]{$p$};
		\node at (0,\lmax) [anchor=east]{$\lambda$};
		\draw[fill=optcol] (0, \la) -- (\pa, \la) -- (\pa, \lb) -- (\pb, \lb) -- (\pb, \lc) -- (\pc, \lc) -- (\pc, \ld) -- (\pd, \ld) -- (\pd, 0) -- (0,0) -- (0, \la);	
		\draw (\pa,2.5pt) -- (\pa, -2.5pt);
		\draw (\pb,2.5pt) -- (\pb, -2.5pt);
		\draw (\pc,2.5pt) -- (\pc, -2.5pt);
		\draw (\pd,2.5pt) -- (\pd, -2.5pt);			
		\draw (-2.5pt, \la) -- (2.5pt, \la);
		\draw (-2.5pt, \lb) -- (2.5pt, \lb);
		\draw (-2.5pt, \lc) -- (2.5pt, \lc);
		\draw (-2.5pt, \ld) -- (2.5pt, \ld);				
		\draw[dashed] (\pa, 0) -- (\pa, \la);
		\draw[dashed] (\pb, 0) -- (\pb, \lb);
		\draw[dashed] (\pc, 0) -- (\pc, \lc);
		\draw[<->] 	(\pb,-1.2*\lunit) -- node [fill=white, inner sep=1pt] {$p_{v}$} (\pc,-1.2*\lunit);
		\node[anchor = east] at (0,\lc) {$\lambda(v,\sigma^\star)$};
		\end{tikzpicture}
		\begin{tikzpicture}[font=\small]
		\draw[->] (-0.2pt,0) -- (\pmax,0);
		\draw[->] (0,-0.2pt) -- (0,\lmax);	
		\node at (\pmax,0) [anchor=north, xshift = 3pt]{$p$};
		\node at (0,\lmax) [anchor=east]{$\varphi$};
		\draw[fill=greedycol] (0, \prag) -- (\pag, \prag) -- (\pag, \prbg) -- (\pbg, \prbg) -- (\pbg, \prcg) -- (\pcg, \prcg) -- (\pcg, 0) -- (0, 0) -- (0, \prag);			
		\draw (\pag,2.5pt) -- (\pag, -2.5pt);
		\draw (\pbg,2.5pt) -- (\pbg, -2.5pt);
		\draw (\pcg,2.5pt) -- (\pcg, -2.5pt);
		\draw (-2.5pt, \prag) -- (2.5pt, \prag);
		\draw (-2.5pt, \prbg) -- (2.5pt, \prbg);
		\draw (-2.5pt, \prcg) -- (2.5pt, \prcg);				
		\draw[dashed] (\pag, 0) -- (\pag, \prag);
		\draw[dashed] (\pbg, 0) -- (\pbg, \prbg);
		\draw[<->] 	(\pag,-1.2*\lunit) -- node [fill=white, inner sep=1pt] {$p(T_i)$} (\pbg,-1.2*\lunit);
		\node[anchor = east] at (0,\prbg) {$\varphi_i$};
		\node[fill=black, inner sep=1pt, circle] at (0.7*\pbg, 0.5*\prbg) {};
		\node[anchor = south] at (0.7*\pbg, 0.5*\prbg) {$q^\prime$};
		\end{tikzpicture}		
		\begin{tikzpicture}[font=\small]
		\draw[->] (-0.2pt,0) -- (\pmax,0);
		\draw[->] (0,-0.2pt) -- (0,\lmax);	
		\node at (\pmax,0) [anchor=north, xshift = 3pt]{$p$};
		\node at (0,\lmax) [anchor=east]{$\lambda$};
		\draw[fill=optcol] (0, \la) -- (\pa, \la) -- (\pa, \lb) -- (\pb, \lb) -- (\pb, \lc) -- (\pc, \lc) -- (\pc, \ld) -- (\pd, \ld) -- (\pd, 0) -- (0,0) -- (0, \la);
		\draw[fill=greedycol] (\pcg/2, \prag/2) -- (\pcg/2 + \pag/2, \prag/2) -- (\pcg/2 + \pag/2, \prbg/2) -- (\pcg/2 + \pbg/2, \prbg/2) -- (\pcg/2 + \pbg/2, \prcg/2) -- (\pcg/2 + \pcg/2, \prcg/2) -- (\pcg/2 + \pcg/2, 0) -- (\pcg/2, 0) -- (\pcg/2, \prag/2);
		\draw (\pcg/2 +  \pag/2,2.5pt) -- (\pcg/2 +  \pag/2, -2.5pt);
		\draw (\pcg,2.5pt) -- (\pcg, -2.5pt);
		\draw (-2.5pt, \prbg/2) -- (2.5pt, \prbg/2);
		\node[fill=black, inner sep=1pt, circle] at (\pcg/2+ 0.7*\pbg/2, 0.5*\prbg/2) {};
		\node[anchor = south] at (\pcg/2+ 0.7*\pbg/2, 0.5*\prbg/2) {$q$};
		\draw[<->] 	(\pcg/2 +  \pag/2,-1.2*\lunit) -- node [fill=white, inner sep=1pt] {$p(R_i)/2$} (\pcg/2 +  \pcg/2,-1.2*\lunit);
		\node[anchor = east] at (0,\prbg/2) {$\dfrac{\varphi_i}{2\alpha}$};
		\end{tikzpicture}
		%\caption{The first diagram has a surface equal to $c(\sigma^\star)$; the second diagram has a surface equal to $\sum_{i = 1}^m p(T_i) \varphi_i$; and the third panel shows that if we shrink the second diagram with factor $4\alpha$, it fits within the first one.}	
	\end{figure}

	The first diagram contains one column for each non-root vertex and these $n$ columns are ordered from left to right in the same order as the optimal expanding search visits the corresponding vertices. In particular, if vertex $v \in V$ is the $k^\text{th}$ one to be searched by $\sigma^\star$, then let~$p_v$ be the width of column $k$ and $\lambda(v,\sigma^\star)$ its height. By Equation~\eqref{eq:ex_distance}, the resulting diagram's surface equals $c(\sigma^\star)$.
	The second diagram, in turn, contains~$m$ columns that correspond to the trees selected in the greedy algorithm. Let $p(T_i)$ be the width of column $i$ and $\varphi_i$ its height, then the diagram's surface equals $ \sum_{i = 1}^m p(T_i) \varphi_i$.
	
	Now shrink the second diagram's surface by dividing the width of each column by two and the height by $2\alpha$. The original diagram's surface then equals $4\alpha$ times the shrunk one.  Next, take an arbitrary point $q^\prime$ within the second diagram, align the shrunk diagram to the right of the first diagram, and call $q$ the point corresponding to $q^\prime$ within the shrunk diagram (see Figure~\ref{fig:approx}). To complete the proof, we show that $q$ lies within the first diagram.
	
	Call~$i$ the index of the column that contains $q^\prime$ in the second diagram, then the height of point~$q$ in the shrunk diagram is at most $\varphi_i/2\alpha$ and its distance from the right-hand side boundary is at most $p(R_i)/2$. Hence, it suffices to show that after having traveled length $\varphi_i/2\alpha$, at least a probability mass of $p(R_i)/2$ still needs to be searched under the optimal sequence~$\sigma^\star$. To prove this, we apply Lemma~\ref{lem:densitybound} with $S = V \setminus R_i$ and $\bar{\rho} = \alpha p(T_i)/\lambda^{(i)}$, where $\bar{\rho}$ upper bounds the density in $G/S_i$ since we obtained $T_i$ from our $1/\alpha$-approximation algorithm for the MDSP.  Lemma~\ref{lem:densitybound} then yields that in length $\tau = \varphi_i/2\alpha$, the optimal sequence $\sigma^\star$ can search at most a probability mass
	\begin{equation*}
	\frac{\varphi_i}{2\alpha}  \frac{\alpha p(T_i)}{\lambda^{(i)}} = \frac{p(R_i)}{2}
	\end{equation*}
	within the set $R_i$. Hence, after length $\varphi_i/2\alpha$,  a probability mass of at least $p(R_i)/2$ still needs to be searched, which implies that point $q$ lies within the first diagram.
	\end{proof}

	In sum, Lemmas~\ref{lem:greedybound} and~\ref{lem:geometric} yield
	\begin{equation*}
	c(\sigma^\text{g}) \leq \sum_{i = 1}^m p(T_i) \varphi_i \leq 4\alpha c(\sigma^\star),
	\end{equation*}
	which completes the proof of Theorem~\ref{th:greedy}.
	\end{proof}

	Since the greedy algorithm needs at most $n$ iterations to construct the sequence $\sigma^\text{g}$, Corollary~\ref{col:MDSP_half} and Theorem~\ref{th:greedy} yield the following result.
	\begin{corollary}\label{col:eight_approx}
		The greedy algorithm and parametric search with $\varepsilon = 1/(2n - 1)$ yield an $8$-approximation algorithm for the expanding search problem that runs in time $O(\vert E \vert n\log(n)\log(M)\log(nM))$, with $M$ the largest input number required to describe the instance. 
	\end{corollary}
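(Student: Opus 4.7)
The plan is to combine the two building blocks already proved, namely Theorem~\ref{th:greedy} (which turns any $1/\alpha$-approximation for the MDSP into a $4\alpha$-approximation for the expanding search problem) and Corollary~\ref{col:MDSP_half} (which gives a $1/2$-approximation for the MDSP within the claimed running time when $\varepsilon = 1/(2n-1)$). Substituting $\alpha = 2$ into Theorem~\ref{th:greedy} immediately yields the approximation factor $4\alpha = 8$ for the expanding search problem when the greedy algorithm is instantiated with the parametric search as the MDSP subroutine.

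For the running time, I would first argue that the outer while loop of the greedy algorithm performs at most $n$ iterations. Each iteration computes a subtree $T_i \in \mathcal{T}(G/S)$ of positive density (such a subtree exists while some vertex in $V\setminus S$ has positive probability, since any single edge from the root of $G/S$ to such a vertex forms such a tree), and then augments $S$ by $V[T_i]\setminus\{r\}$, which contains at least one new non-root vertex. Hence at most $n$ MDSP subroutine calls are made before every vertex with positive probability lies in $S$.

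Next I would bound the cost of a single iteration. By Corollary~\ref{col:MDSP_half}, the parametric search with $\varepsilon = 1/(2n-1)$ runs in $O(|E|\log(n)\log(M)\log(nM))$ time on any input of size $n$ with largest number $M$. The contracted graph $G/S$ has no more vertices and no more edges than $G$, and the lengths $\lambda^S_e$ are minima of original lengths (thus bounded by $M$), so the per-iteration bound carries over. Multiplying by the $n$ iterations gives the claimed total running time $O(|E| n \log(n)\log(M)\log(nM))$.

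There is no real obstacle here; the only subtlety worth stating explicitly is that the parameters of the contracted subinstances fed to the MDSP solver remain within the size and magnitude bounds of the original instance, so the running-time estimate of Corollary~\ref{col:MDSP_half} applies uniformly in every iteration. Once this is noted, the corollary follows by concatenating the approximation factor from Theorem~\ref{th:greedy} with $\alpha=2$ and the running-time product sketched above.
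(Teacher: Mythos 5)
Your proposal is correct and follows essentially the same route as the paper, which likewise obtains the factor $8$ by instantiating Theorem~\ref{th:greedy} with $\alpha=2$ from Corollary~\ref{col:MDSP_half} and bounds the running time by the at-most-$n$ greedy iterations times the cost of one parametric search. The paper states this in a single sentence; your additional remarks (each iteration adds a new vertex, and the contracted subinstances stay within the original size and magnitude bounds) simply make explicit what the paper leaves implicit.
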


	\subsection{Local search}\label{sec:local_search}
	We now describe a local search algorithm that attempts to improve a given search sequence, such as the one obtained by our greedy algorithm. 
	In this procedure, a solution is represented by a spanning tree in the metric closure of the instance graph. The associated cost is the search cost of an optimal sequence within this tree.
	A local move consists of adding one edge and removing another edge such that we obtain a new spanning tree. If the optimal search cost of the new tree is better than that of the previous tree, we proceed with it; otherwise, we try the next local move. The algorithm ends when no move exists that improves upon the current tree's search cost.

	To formalize this, denote by $\bar{G} = (V, \bar{E})$ the metric closure of graph $G = (V,E)$, i.e.\ the complete graph on $V$ with the length of edge $\{v,w\} \in \bar{E}$ equal to the shortest path between vertices~$v$ and~$w$ in~$G$. Let $T \in \mathcal{T}(\bar{G})$ be a given spanning tree of~$\bar{G}$, and define $c^\star(T)$ as the optimal search cost associated with this tree. Note that~$c^\star(T)$ can be efficiently computed using, for example, the algorithm of \cite{monma1979sequencing}. Next, call $C_{T,e}$ the \emph{fundamental cycle} of~$T$ with respect to an edge $e \in \bar{E} \setminus E[T]$, i.e.\ the unique cycle in the graph obtained by adding edge~$e$ to tree~$T$. Finally, let $T + e - e^\prime$ denote the tree obtained by adding an edge $e \in \bar{E} \setminus E[T]$ to~$T$ and removing another edge~$e^\prime \in C_{T,e}$. Our \emph{local search algorithm} then produces a feasible expanding search sequence~$\sigma^{\text{ls}}$ for the original graph~$G$ as follows:
	\begin{enumerate}
		\item Given a spanning tree $T \in \mathcal{T}(\bar{G})$, initialize $S \gets \bar{E} \setminus E[T]$
		\item While $S$ is non-empty, do
		\begin{enumerate}
			\item take an arbitrary edge~$e \in S$
			\item if $c^\star(T + e - e^\prime) < c^\star(T)$ for an edge~$e^\prime \in C_{T,e}$, let $T \gets T + e - e^\prime$ and $S \gets \bar{E} \setminus E[T]$
			\item else, let $S \gets S \setminus \{e\}$
		\end{enumerate}
		\item Let~$\sigma^{\text{ls}}$ be the search sequence in~$G$ obtained from the optimal sequence associated with~$T$ in~$\bar{G}$ by replacing each transitive edge by the underlying shortest path in~$G$ and, to avoid cycles, omitting those edges that connect two previously visited vertices.
	\end{enumerate}
	
	Our computational experiments suggest that, empirically, this local search provides a high-quality solution as it attained the optimum in around 80\% of the cases. The next lemma shows that the procedure always finds a globally optimal solution if the input graph is a cycle. To keep the proof simple, we assume that $p_v > 0$ for all $v \in V \setminus \{r\}$, which is without loss of generality.

	\begin{lemma}
		If $G$ is a cycle, every locally optimal tree is also globally optimal.
	\end{lemma}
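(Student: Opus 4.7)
I plan to prove the lemma by a two-step reduction: first show that any locally optimal tree $T$ is in fact a spanning tree of the cycle $G$ itself (not merely of $\bar{G}$), and then exploit the simple combinatorics of spanning trees of a cycle to conclude global optimality.

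The second step is straightforward. The spanning trees of $G$ are exactly the $n+1$ trees $T_k = G \setminus \{e_k\}$ obtained by deleting one cycle edge, and any two of them, $T_k$ and $T_j$, are a single local swap apart in $\bar{G}$: adding $e_k \in E \setminus E[T_k]$ to $T_k$ produces the entire cycle $G$ as the unique fundamental cycle $C_{T_k, e_k}$, from which removing any other cycle edge $e_j$ yields $T_j$. Hence any locally optimal spanning tree of $G$ automatically satisfies $c^\star(T) \leq c^\star(T_j)$ for every $j$, so it attains $\min_j c^\star(T_j)$. This minimum equals the optimum of the expanding search problem in $G$, since every feasible expanding search in $G$ traces a spanning tree of $G$.

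The first step is the main obstacle. Suppose, for contradiction, that a locally optimal $T$ contains a transitive edge $e^* = \{v_i, v_j\} \in \bar{E} \setminus E$; its underlying shortest path in the cycle uses cycle edges $f_1, \ldots, f_m$ with $m \geq 2$. Removing $e^*$ splits $T$ into two components $T_i \ni v_i$ and $T_j \ni v_j$. Walking along the shortest-path arc $v_i = w_0, w_1, \ldots, w_m = v_j$, the component label must switch at some index $\ell$, so $f_\ell = \{w_{\ell-1}, w_\ell\}$ bridges $T_i$ and $T_j$ in $G$. Moreover $f_\ell \notin E[T]$ (else $w_{\ell-1}$ and $w_\ell$ would lie in the same component of $T - e^*$), so $e^* \in C_{T, f_\ell}$ and the swap $T' := T + f_\ell - e^*$ is a valid local move that replaces the long shortcut with a strictly shorter cycle edge. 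The claim is that, under the assumption $p_v > 0$ for all non-root $v$, one has $c^\star(T') < c^\star(T)$, contradicting local optimality.

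Verifying this strict decrease is the technical core. It is not the case that every individual vertex's arrival time is smaller in the optimal search on $T'$: the rerouting shortens tree-distances to some vertices but can lengthen others. What needs to be shown is that the weighted sum of arrival times decreases, because $e^*$ is replaced by a strictly shorter edge and at least one intermediate vertex of the shortest-path arc becomes ``directly'' reachable from the $T_i$-side, a structural improvement that propagates to its descendants in the optimal search order. I expect this to be provable via a careful comparison of the Sidney decompositions of $T$ and $T'$ together with a Smith-rule-style exchange argument on their optimal search sequences; the positivity $p_v > 0$ is what turns a weak monotonicity into a strict one, without which degenerate ties could allow locally optimal trees to persistently retain shortcuts.
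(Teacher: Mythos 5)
Your overall architecture matches the paper's: first argue that a locally optimal tree uses only cycle edges, then use the fact that any two spanning trees of a cycle are a single swap apart (that second step is correct and is exactly how the paper concludes). The problem is that your first step hinges entirely on the claim $c^\star(T + f_\ell - e^*) < c^\star(T)$, which you explicitly leave unproven (``verifying this strict decrease is the technical core \dots\ I expect this to be provable via \dots''). This is not a routine verification that can be deferred: as you yourself note, the swap can lengthen distances to some vertices, and more importantly it changes the precedence structure --- in $T' = T + f_\ell - e^*$ the entire component $T_j$ can only be entered after $w_{\ell-1}$ has been visited, and $w_{\ell-1}$ may be reached very late in the optimal search of $T$ (it need not be anywhere near $v_i$ in the tree, since $T$ may contain several other transitive edges). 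So there is no evident injection of feasible searches of $T$ into feasible searches of $T'$ of no greater cost, and the appeal to comparing Sidney decompositions of two structurally different trees is a hope rather than an argument.

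The paper sidesteps this difficulty by choosing the swap as a function of the optimal search sequence $\sigma$ of $T$ rather than of the component structure of $T - e^*$: it takes $e=\{v,w\}$ to be the \emph{last} transitive edge traversed by $\sigma$ and $z$ to be the \emph{first-visited} internal vertex of its underlying path, and then distinguishes whether $z$ is visited before or after $w$. If before, replacing $e$ by the strictly shorter edge $\{z,w\}$ keeps $\sigma$ feasible (its tail $z$ is already visited) and strictly lowers the cost. If after, the choices of $e$ and $z$ force $\{v,z\}$ or $\{z,w\}$ to already lie in $T$, and replacing $e$ by the missing one of these two yields a feasible sequence in which $z$ is visited strictly earlier and no vertex later. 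In both cases one exhibits an explicit feasible search for the new tree dominating $\sigma$, so only an upper bound on the new tree's $c^\star$ is needed --- no comparison of two optimal searches. To salvage your version you would have to actually prove the strict decrease for your particular $f_\ell$ (which I suspect requires essentially this case analysis anyway, and may require a different choice of edge when $T$ contains several transitive edges), or switch to a search-order-based choice of the edge to add.
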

	\begin{proof}
	Let $T \in \mathcal{T}(\bar{G})$ be a locally optimal tree in the metric closure of $G$.
	By contradiction assume that $T$ contains an edge $e = \{v, w\}$ that is in $\bar{G}$ but not in $G$.
	W.l.o.g.\ assume $e$ is the last such edge in the optimal search sequence $\sigma$ for $T$ and that $v$ is visited before $w$.
	Because $e$ is not in $G$, it corresponds to a path in $G$ containing at least one internal vertex. Let $z$ be the internal vertex of this path that is visited first in the sequence $\sigma$.
	If $z$ is visited before $w$ in $\sigma$, then the tree $T' = T + \{z, w\} - \{v, w\}$ has lower optimal search cost than $T$ (note that a feasible search sequence for $T'$ is $\sigma$ with $\{v, w\}$ replaced by $\{z, w\}$, with $\lambda_{\{z, w\}} < \lambda_{\{v, w\}}$).
	If $z$ is visited after $w$ in $\sigma$, then by choice of $e$ and $z$, either the edge $\{v, z\}$ or the edge $\{z, w\}$ must be in $T$. Note that replacing $\{v, w\} \in T$ by the respective missing edge is a feasible local move, creating a new tree $T''$. A feasible search sequence for $T''$ is $\sigma$ with $\{v, w\}$ replaced by the tuple $\{v, z\}, \{z, w\}$. In this sequence, $z$ is visited strictly earlier while no other vertex is visited later than in the optimal order for $T$. Hence, in either case, $T$ was not locally optimal, yielding a contradiction.
	
	We conclude that $T$ only contains edges of $G$.
	Because $G$ is a cycle, any tree in $G$ can be attained from $T$ by a local move.
	Because there is a tree $T^*$ in $G$ that corresponds to a globally optimal search sequence, also $T$ must be globally optimal.
	\end{proof}

	\begin{figure}
		\centering
		\caption{Instance that illustrates the importance of taking the graph's metric closure in our local search.}\label{fig:mcl}
		\begin{tikzpicture}[
		baseline,
		vertexst/.style 2 args ={circle, draw=black, fill=white, label={[inner sep = 0]{#1}: \small{#2}}},
		length/.style={fill=white, circle, inner sep=1mm},
		dots/.style={fill=white, minimum size=3.5ex},
		dot/.style={fill=black, circle, minimum size=1.5pt, inner sep=0},
		locopt/.style={},
		length/.style={fill=white, anchor=center, circle, inner sep=0.5mm, font=\small},
		]
		\node[vertexst={}{}] (r) {$r$};
		\coordinate[xshift=1.5cm] (ngroup) at (r.east);
		\node[yshift=-2ex] (dots) at (ngroup) {$\vdots$};
		\node[vertexst={20}{$\frac{1}{k}$}, anchor=south, yshift=1ex] (2) at (dots.north) {$2$};
		\node[vertexst={20}{$\frac{1}{k}$}, anchor=south, yshift=2ex] (1) at (2.north) {$1$};
		\node[vertexst={-20}{$\frac{1}{k}$}, anchor=north, yshift=-1.5ex] (k) at (dots.south) {$k$};
		\node[vertexst={}{}, xshift=1.5cm, anchor=west] (n) at (ngroup) {$n$};
		\node[length, yshift=-2ex, anchor=north] at (k.south) (m) {$m$};
		\draw[locopt] (r) -- node[length] {$m$} (1);
		\draw[locopt] (r) -- node[length] {$m$} (2);
		\draw[locopt] (r) -- node[length] {$m$} (k);
		\draw (1) -- node[length] {$1$} (n);
		\draw (2) -- node[length] {$1$} (n);
		\draw (k) -- node[length] {$1$} (n);
		\draw[locopt] (r) to[out=270, in=180] (m.west);
		\draw[locopt] (m.east) to[out=0, in=270] (n);
		\end{tikzpicture}
	\end{figure}
	
	Figure~\ref{fig:mcl} illustrates the importance of taking the metric closure of the graph. Given positive integers~$n$,~$m$, and~$k=n-1$, consider the search sequence $\sigma_1 = (\{r,1\}, \{r,2\}, \ldots, \{r,k\}, \{r,n\})$,  and the search sequence $\sigma^\star = (\{r,n\}, \{1,n\},$ $\{2,n\}, \ldots, \{k,n\})$, which is a global optimum for $m, k \geq 3$. The associated search costs are $c(\sigma_1) = \sum_{i = 1}^k \frac{im}{k} = \frac{m(k+1)}{2}$ and $c(\sigma^\star) = m + \frac{(k+1)}{2}$, such that the ratio $c(\sigma_1) / c(\sigma^\star)$ can be made arbitrarily high by choosing~$k$ and~$m$ sufficiently large. Without considering the metric closure of the graph, the tree defined by the first sequence forms a local optimum because replacing an edge~$\{r,i\}$ by another edge~$\{i,n\}$ only increases the distance to reach vertex~$i \in \{1,\ldots,k\}$. If we do consider the metric closure, however, this tree is locally optimal only if $m \leq 2$ because, otherwise, replacing edge~$\{r,2\}$ by the transitive edge~$\{1,2\}$ of length~$2$ would decrease the search cost. Since for $m = 2$ the ratio $c(\sigma_1) / c(\sigma^\star)$ tends to~$2$ as~$k$ increases, we conclude that the worst-case ratio for the instance of Figure~\ref{fig:mcl} is arbitrarily high without considering the graph's metric closure and is bounded by~$2$ if we do consider the metric closure. This latter observation also implies that the worst-case ratio of our local search is at least~$2$.

	\cite{averbakh2012flowtime} have described a different local search for expanding search. In their method a solution is represented by the sequence in which the vertices are visited instead of by the underlying spanning tree. Given a permutation $\pi = (r, \pi_1, \ldots, \pi_n)$ of the vertices in~$V$ with $\pi_k$ indicating the~$k^\text{th}$ vertex to be searched, define a spanning tree $T^\pi \in \mathcal{T}(\bar{G})$ by including for each $k \in \{1,\ldots,n\}$ a minimum-length edge in the metric closure~$\bar{G}$ that connects vertex~$\pi_k$ to one of the vertices in~$\{r,\pi_1,\ldots,\pi_{k-1}\}$. Given a permutation~$\pi$, the local search of \cite{averbakh2012flowtime} then checks whether there exist~$i$ and~$j$, with $i < j$, such that the permutation~$\pi^\prime$ obtained by inserting vertex~$\pi_j$ at position~$i$ in permutation~$\pi$ and shifting all vertices $(\pi_i,\ldots, \pi_{j-1})$ one position backwards attains a lower search cost $c^\star(T^{\pi^\prime}) < c^\star(T^\pi)$. If so, the procedure repeats with permutation~$\pi^\prime$, otherwise~$\pi$ is a local optimum. Observe that \cite{averbakh2012flowtime} focus on complete graphs in their article, but that this is equivalent to our setting when taking the metric closure.

	\pgfmathsetlengthmacro\radius{2.2cm}
	\begin{figure}
		\centering
		\caption{Instance for which the local search of \cite{averbakh2012flowtime} has locality gap $\Omega(n)$.}\label{fig:ap_bad}
		\begin{tikzpicture}[
		baseline,
		vertexst/.style 2 args ={circle, draw=black, fill=white, label={[inner sep = 0]{#1}: \small{#2}}},
		length/.style={fill=white, circle, inner sep=0.5mm, font=\small},
		dots/.style={fill=white, minimum size=3.5ex},
		dot/.style={fill=black, circle, minimum size=1.5pt, inner sep=0},
		edgest/.style={bend left},
		]
		\draw (0,0) circle[radius=\radius];
		\node[vertexst={0}{}] (r) at ({Mod(270+0*360/7, 360)}:\radius) {$r$};
		\node[vertexst={0}{$\frac{1}{n}$}] (1) at ({Mod(270+1*360/7, 360)}:\radius) {$1$};
		\node[vertexst={0}{$\frac{1}{n}$}] (2) at ({Mod(270+2*360/7, 360)}:\radius) {$2$};
		\node[dots] (dots1) at ({Mod(270+3*360/7, 360)}:\radius) {};
		\node[vertexst={0}{}, label={[inner sep = 0, yshift=3pt]{178}: \small{$\frac{1}{n}$}}] (i) at ({Mod(270+4*360/7, 360)}:\radius) {$i$};
		\node[dots] (dots2) at ({Mod(270+5*360/7, 360)}:\radius) {};
		\node[vertexst={180}{$\frac{1}{n}$}] (n) at ({Mod(270+6*360/7, 360)}:\radius) {$n$};
		\node[dot] at ({Mod(270+3*360/7, 360)}:\radius) {};
		\node[dot] at ({Mod(270+3*360/7 + 3, 360)}:\radius) {};
		\node[dot] at ({Mod(270+3*360/7 - 3, 360)}:\radius) {};
		\node[dot] at ({Mod(270+5*360/7, 360)}:\radius) {};
		\node[dot] at ({Mod(270+5*360/7 + 3, 360)}:\radius) {};
		\node[dot] at ({Mod(270+5*360/7 - 3, 360)}:\radius) {};
		\node[length] (r1) at ({Mod(270+0.5*360/7, 360)}:\radius) {$n$};
		\node[length] (12) at ({Mod(270+1.5*360/7, 360)}:\radius) {$1$};
		\node[length] (2d) at ({Mod(270+2.5*360/7, 360)}:\radius) {$2$};
		\node[length, rectangle] (di) at ({Mod(270+3.5*360/7, 360)}:\radius) {$i-1$};
		\node[length] (id) at ({Mod(270+4.5*360/7, 360)}:\radius) {$i$};
		\node[length, rectangle] (n1r) at ({Mod(270+5.5*360/7, 360)}:\radius) {$n-1$};
		\node[length] (nr) at ({Mod(270+6.5*360/7, 360)}:\radius) {$n$};
		\end{tikzpicture}
	\end{figure}

	Although the insertion-based local search of \cite{averbakh2012flowtime} is a natural approach, Figure~\ref{fig:ap_bad} illustrates that it can perform arbitrarily bad on a cycle. This contrasts with our local search procedure that, as argued above, finds a global optimum on a cycle. Referring to Figure~\ref{fig:ap_bad}, the `clockwise' sequence~$\pi = (r,n,n-1,\ldots, 1)$ forms a local optimum for the procedure of \cite{averbakh2012flowtime} with search cost $c^\star(T^\pi) = \sum_{i=1}^n \sum_{j=i}^n \frac{j}{n} = \frac{(n+1)(2n+1)}{6}$ (it is easy to check that changing only a single element in the sequence does not result in lower cost). The optimal expanding search $\sigma^\star = (\{r,1\}, \{1,2\}, \ldots, \{n-1,n\})$, however, goes counterclockwise and attains a search cost of $c(\sigma^\star) = \sum_{i=1}^n \frac{n + i - 1}{n} = \frac{3n-1}{2}$. Hence, the ratio $c^\star(T^\pi) / c(\sigma^\star)$ grows proportionally with $n$, establishing that the locality gap of this procedure is $\Omega(n)$.

	\section{Computational experiment}\label{sec:comp_res}
	This section reports how our methods perform on existing and newly generated test instances. After providing details regarding the implementation, we test our branch-and-cut algorithm on the benchmark instances provided by \cite{averbakh2012flowtime}. The results indicate that our method outperforms the current state of the art for weighted instances (i.e.\ with probabilities that differ across vertices) and that sparse graphs are considerably easier to solve than dense ones. We also provide numerical evidence that our valid inequalities and heuristics lead to reasonably tight lower and upper bounds for the optimal search cost.

	\subsection{Implementation details}\label{sec:impl_det}
	All our algorithms were implemented using the C++ programming language, compiled with Microsoft Visual C++ 14.0, and run using an Intel Core i7-4790 processor with 3.60 GHz CPU speed and 8 GB of RAM under a Windows 10 64-bit OS. To solve the mixed integer programs, we employed the commercial solver IBM ILOG CPLEX 12.8 using only one thread. Apart from the warm start provided by our greedy algorithm combined with the local search, all CPLEX parameters were set to their default values.  
	
	Unless mentioned otherwise, we ran our branch-and-cut algorithm with both classes of valid inequalities, i.e.\ with cuts~(C1) and~(C2).  Violated valid inequalities were separated using the Ford-Fulkerson algorithm \citep{ford_fulkerson_1956}. Relying on more efficient algorithms for the maximum flow problem, such as the push-relabel algorithm  of \cite{goldberg1988new}, should not affect our results because separating the cuts  requires only a relatively small fraction of the computation time. For the GW subroutine in our greedy algorithm, finally, we used the implementation of \cite{hegde2015nearly} that is available on GitHub (\url{https://github.com/fraenkel-lab/pcst_fast}). As part of our local search, we employed the polynomial-time algorithm of \cite{monma1979sequencing} to determine an optimal sequence for a given tree, and we took the tree returned by the greedy algorithm as input for the local search procedure.

	\subsection{Performance on instances of \cite{averbakh2012flowtime}}\label{sec:comp_ap}
	\cite{averbakh2012flowtime} consider both an unweighted and a weighted version of their problem, which corresponds to whether or not all vertices have equal probabilities. They further distinguish between random and euclidean network structures. In the first case, the distance between each pair of vertices is randomly generated, after which the metric closure of the graph is taken by computing shortest paths. In the second case, the distances are generated by associating with each vertex a point in the euclidean plane and then taking the euclidean distance between each pair of points and rounding to the nearest integer. For both types, the authors generated 10 benchmark instances with and without vertex weights for each of the considered network sizes.

	\begin{table}
		\centering
		%\small
		\setlength\tabcolsep{0pt}
		\caption{Average CPU time in seconds for the branch-and-bound method of \cite{averbakh2012flowtime} using an Intel Core 2 Duo 2.33 GHz processor and a 30-min time limit. The number of solved instances appears in parentheses if not all 10 could be solved.} \label{tab:comp_ap}
		\begin {tabular}{r@{\extracolsep {12pt}}r@{ \extracolsep {0pt}}l@{\extracolsep {12pt}}r@{ \extracolsep {0pt}}l@{\extracolsep {12pt}}r@{ \extracolsep {0pt}}l@{\extracolsep {12pt}}r@{ \extracolsep {0pt}}l@{\extracolsep {12pt}}}%
		\toprule & \multicolumn {4}{c}{unweighted} & \multicolumn {4}{c}{weighted} \\ \cmidrule (lr){2-5} \cmidrule (lr){6-9} $n$ & \multicolumn {2}{c}{random} & \multicolumn {2}{c}{euclidean} & \multicolumn {2}{c}{random} & \multicolumn {2}{c}{euclidean} \\ \midrule %
		10&\pgfutilensuremath {0.01}&&\pgfutilensuremath {0.01}&&\pgfutilensuremath {0.00}&&\pgfutilensuremath {0.00}&\\%
		15&\pgfutilensuremath {0.01}&&\pgfutilensuremath {0.01}&&\pgfutilensuremath {0.01}&&\pgfutilensuremath {0.01}&\\%
		20&\pgfutilensuremath {0.01}&&\pgfutilensuremath {0.01}&&\pgfutilensuremath {0.02}&&\pgfutilensuremath {0.02}&\\%
		25&\pgfutilensuremath {0.01}&&\pgfutilensuremath {0.01}&&\pgfutilensuremath {1.10}&&\pgfutilensuremath {0.30}&\\%
		30&\pgfutilensuremath {0.01}&&\pgfutilensuremath {0.02}&&\pgfutilensuremath {9.59}&&\pgfutilensuremath {2.57}&\\%
		35&\pgfutilensuremath {0.07}&&\pgfutilensuremath {0.09}&&\pgfutilensuremath {229.40}&&\pgfutilensuremath {72.27}&\\%
		40&\pgfutilensuremath {0.05}&&\pgfutilensuremath {0.36}&&\pgfutilensuremath {836.94}&(4)&\pgfutilensuremath {487.96}&(7)\\%
		45&\pgfutilensuremath {0.15}&&\pgfutilensuremath {2.37}&&&&\pgfutilensuremath {1{,}377.24}&(3)\\%
		50&\pgfutilensuremath {0.91}&&\pgfutilensuremath {11.86}&&&&&\\%
		55&\pgfutilensuremath {4.71}&&\pgfutilensuremath {70.52}&&&&&\\%
		60&\pgfutilensuremath {12.57}&&\pgfutilensuremath {165.83}&(9)&&&&\\%
		65&\pgfutilensuremath {135.35}&&\pgfutilensuremath {468.80}&(8)&&&&\\%
		70&\pgfutilensuremath {263.14}&&\pgfutilensuremath {858.16}&(5)&&&&\\\bottomrule %
		\end {tabular}%
	\end{table}

	Tables~\ref{tab:comp_ap}-\ref{tab:comp_ap2} displays the average CPU time in seconds for the branch-and-bound algorithm of \cite{averbakh2012flowtime} and for our branch-and-cut method, where the averaging occurs over the solved instances within each setting. Importantly, the CPU times for the algorithm of \cite{averbakh2012flowtime} have been taken directly from their article because we were unable to obtain or to replicate their code. To partially compensate for the different processor speeds (2.33 GHz versus 3.60 GHz), we employ a time limit of 20 minutes to solve the instances instead of the 30 minutes used in \cite{averbakh2012flowtime}.  If not all 10 instances were solved within their respective time limit, the number of solved instances appears in parentheses.
	
		\begin{table}
		\centering
		\setlength\tabcolsep{0pt}
		\caption{Average CPU time in seconds for our branch-and-cut algorithm using an Intel Core i7-4790 3.60 GHz processor and a 20-min time limit. The number of solved instances appears in parentheses if not all 10 could be solved.} \label{tab:comp_ap2}
		\begin {tabular}{r@{\extracolsep {12pt}}r@{ \extracolsep {0pt}}l@{\extracolsep {12pt}}r@{ \extracolsep {0pt}}l@{\extracolsep {12pt}}r@{ \extracolsep {0pt}}l@{\extracolsep {12pt}}r@{ \extracolsep {0pt}}l@{\extracolsep {12pt}}}%
		\toprule & \multicolumn {4}{c}{unweighted} & \multicolumn {4}{c}{weighted} \\ \cmidrule (lr){2-5} \cmidrule (lr){6-9} $n$ & \multicolumn {2}{c}{random} & \multicolumn {2}{c}{euclidean} & \multicolumn {2}{c}{random} & \multicolumn {2}{c}{euclidean} \\ \midrule %
		10&\pgfutilensuremath {0.15}&&\pgfutilensuremath {0.15}&&\pgfutilensuremath {0.15}&&\pgfutilensuremath {0.16}&\\%
		15&\pgfutilensuremath {0.24}&&\pgfutilensuremath {0.32}&&\pgfutilensuremath {0.22}&&\pgfutilensuremath {0.29}&\\%
		20&\pgfutilensuremath {0.98}&&\pgfutilensuremath {1.09}&&\pgfutilensuremath {0.67}&&\pgfutilensuremath {1.09}&\\%
		25&\pgfutilensuremath {1.67}&&\pgfutilensuremath {3.31}&&\pgfutilensuremath {2.60}&&\pgfutilensuremath {1.94}&\\%
		30&\pgfutilensuremath {2.34}&&\pgfutilensuremath {7.76}&&\pgfutilensuremath {2.76}&&\pgfutilensuremath {22.15}&\\%
		35&\pgfutilensuremath {7.86}&&\pgfutilensuremath {27.83}&&\pgfutilensuremath {6.91}&&\pgfutilensuremath {57.24}&\\%
		40&\pgfutilensuremath {10.05}&&\pgfutilensuremath {60.82}&&\pgfutilensuremath {13.54}&&\pgfutilensuremath {110.67}&\\%
		45&\pgfutilensuremath {12.54}&&\pgfutilensuremath {157.80}&&\pgfutilensuremath {18.44}&&\pgfutilensuremath {144.04}&\\%
		50&\pgfutilensuremath {71.44}&&\pgfutilensuremath {192.16}&(9)&\pgfutilensuremath {48.60}&(9)&\pgfutilensuremath {161.69}&(8)\\%
		55&\pgfutilensuremath {41.69}&&\pgfutilensuremath {436.99}&(7)&\pgfutilensuremath {70.98}&&\pgfutilensuremath {168.96}&(6)\\%
		60&\pgfutilensuremath {98.79}&&\pgfutilensuremath {577.51}&(4)&\pgfutilensuremath {112.21}&(9)&\pgfutilensuremath {683.07}&(3)\\%
		65&\pgfutilensuremath {77.60}&&\pgfutilensuremath {601.35}&(3)&\pgfutilensuremath {195.91}&&\pgfutilensuremath {486.17}&(4)\\%
		70&\pgfutilensuremath {166.78}&(9)&&&\pgfutilensuremath {162.57}&&&\\\bottomrule %
		\end {tabular}%
	\end{table}	
	
	The results indicate that our branch-and-cut method outperforms the branch-and-bound algorithm of \cite{averbakh2012flowtime} for the weighted but not for the unweighted instances. This behavior occurs because the method used to compute lower bounds for the unweighted instances in \cite{averbakh2012flowtime} differs from the one used to compute lower bounds for the weighted instances. As a result, also the bounds' quality differs for the two cases. Our formulation, in contrast, remains unaffected by vertex weights. 
	
	Observe that our branch-and-cut algorithm solves random instances with considerably less effort than it needs to solve the euclidean ones. One possible explanation is that our formulation might implicitly ignore those edges whose random length was replaced by a shorter path; indeed, these edges are dominated because they have a length equal to a path that also visits intermediate vertices. Ignoring these dominated edges could improve the performance because it essentially reduces the number of spanning trees that our method needs to enumerate. The next section examines this effect in further detail.
	
	For the unsolved instances, we also recorded the relative gap between the best upper and lower bound obtained after reaching the time limit. The average \emph{optimality gap}, not included in the tables, was at most 2.16\% for our branch-and-cut method, whereas \cite{averbakh2012flowtime} reported an average optimality gap of up to 20.77\%.

	\subsection{Influence of network density}\label{sec:infl_netwd}
	The \emph{density} of a graph $G = (V,E)$, not to be confused with Section~\ref{sec:approx}'s search density, is defined as the ratio between the actual number of edges $\vert E \vert$ and the maximum possible number of edges $\vert V \vert (\vert V \vert - 1) / 2$. To examine how the network density influences the computational performance, a new set of test instances was generated in which we control for the network density. In particular, for all considered graph sizes and densities, ten instances were randomly generated as follows. An integer weight~$a_v$ between~0 and~1000 was drawn randomly for each vertex $v \in V \setminus \{r\}$, after which we set the probability $p_v = a_v / \sum_{w \in V \setminus \{r\}} a_w$. To generate edges, first a spanning tree was constructed by randomly selecting a sequence of edges and including the next edge whenever it connects two previously unconnected vertices. Secondly, edges were randomly added until the desired density was obtained. Edge lengths were generated by assigning an arbitrary point in the natural cube between~0 and~100 to each vertex and then taking the rectilinear (or Manhattan) distance between those vertices that are connected by an edge.

	\begin{table}
		\centering
		%\small
		\setlength\tabcolsep{0pt}
		\caption{
			Influence of network density on average CPU time (in seconds), number of solved instances (\# out of 10), and average optimality gap (in \%) using a 20-min time limit.
		}\label{tab:comp_mip}
		\begin {tabular}{r@{\extracolsep {12pt}}r@{\extracolsep {8pt}}r@{\extracolsep {8pt}}l@{\extracolsep {12pt}}r@{\extracolsep {8pt}}r@{\extracolsep {8pt}}l@{\extracolsep {12pt}}r@{\extracolsep {8pt}}r@{\extracolsep {8pt}}l@{\extracolsep {12pt}}}%
		\toprule & \multicolumn {3}{c}{density $ = 20\%$} & \multicolumn {3}{c}{density $ = 60\%$} & \multicolumn {3}{c}{density $ = 100\%$}\\ \cmidrule (lr){2-4} \cmidrule (lr){5-7} \cmidrule (lr){8-10}$n$&\multicolumn {1}{c}{CPU}&\multicolumn {1}{c}{\#}&\multicolumn {1}{c}{gap}&\multicolumn {1}{c}{CPU}&\multicolumn {1}{c}{\#}&\multicolumn {1}{c}{gap}&\multicolumn {1}{c}{CPU}&\multicolumn {1}{c}{\#}&\multicolumn {1}{c}{gap}\\\midrule %
		\pgfutilensuremath {10}&\pgfutilensuremath {0.08}&\pgfutilensuremath {10}&\pgfutilensuremath {0.00}&\pgfutilensuremath {0.11}&\pgfutilensuremath {10}&\pgfutilensuremath {0.00}&\pgfutilensuremath {0.12}&\pgfutilensuremath {10}&\pgfutilensuremath {0.00}\\%
		\pgfutilensuremath {20}&\pgfutilensuremath {0.27}&\pgfutilensuremath {10}&\pgfutilensuremath {0.00}&\pgfutilensuremath {0.52}&\pgfutilensuremath {10}&\pgfutilensuremath {0.00}&\pgfutilensuremath {0.79}&\pgfutilensuremath {10}&\pgfutilensuremath {0.00}\\%
		\pgfutilensuremath {30}&\pgfutilensuremath {1.20}&\pgfutilensuremath {10}&\pgfutilensuremath {0.00}&\pgfutilensuremath {2.24}&\pgfutilensuremath {10}&\pgfutilensuremath {0.00}&\pgfutilensuremath {4.74}&\pgfutilensuremath {10}&\pgfutilensuremath {0.00}\\%
		\pgfutilensuremath {40}&\pgfutilensuremath {2.20}&\pgfutilensuremath {10}&\pgfutilensuremath {0.00}&\pgfutilensuremath {8.19}&\pgfutilensuremath {10}&\pgfutilensuremath {0.00}&\pgfutilensuremath {20.19}&\pgfutilensuremath {10}&\pgfutilensuremath {0.00}\\%
		\pgfutilensuremath {50}&\pgfutilensuremath {6.82}&\pgfutilensuremath {10}&\pgfutilensuremath {0.00}&\pgfutilensuremath {83.46}&\pgfutilensuremath {10}&\pgfutilensuremath {0.00}&\pgfutilensuremath {95.13}&\pgfutilensuremath {8}&\pgfutilensuremath {0.95}\\%
		\pgfutilensuremath {60}&\pgfutilensuremath {59.34}&\pgfutilensuremath {9}&\pgfutilensuremath {0.32}&\pgfutilensuremath {141.96}&\pgfutilensuremath {9}&\pgfutilensuremath {0.79}&\pgfutilensuremath {291.45}&\pgfutilensuremath {7}&\pgfutilensuremath {0.51}\\%
		\pgfutilensuremath {70}&\pgfutilensuremath {108.15}&\pgfutilensuremath {10}&\pgfutilensuremath {0.00}&\pgfutilensuremath {536.39}&\pgfutilensuremath {6}&\pgfutilensuremath {0.64}&\pgfutilensuremath {305.09}&\pgfutilensuremath {3}&\pgfutilensuremath {1.29}\\%
		\pgfutilensuremath {80}&\pgfutilensuremath {194.68}&\pgfutilensuremath {10}&\pgfutilensuremath {0.00}&\pgfutilensuremath {478.04}&\pgfutilensuremath {5}&\pgfutilensuremath {0.91}&\pgfutilensuremath {720.39}&\pgfutilensuremath {2}&\pgfutilensuremath {0.99}\\%
		\pgfutilensuremath {90}&\pgfutilensuremath {321.08}&\pgfutilensuremath {10}&\pgfutilensuremath {0.00}&\pgfutilensuremath {1{,}011.24}&\pgfutilensuremath {2}&\pgfutilensuremath {0.75}&&&\\%
		\pgfutilensuremath {100}&\pgfutilensuremath {454.05}&\pgfutilensuremath {6}&\pgfutilensuremath {0.79}&&&&&&\\%
		\pgfutilensuremath {110}&\pgfutilensuremath {676.50}&\pgfutilensuremath {4}&\pgfutilensuremath {0.38}&&&&&&\\%
		\pgfutilensuremath {120}&\pgfutilensuremath {802.89}&\pgfutilensuremath {2}&\pgfutilensuremath {0.85}&&&&&&\\\bottomrule %
		\end {tabular}%
	\end{table}

	Table~\ref{tab:comp_mip} shows that the network density strongly affects the CPU times. For graphs with $n = 40$ vertices and a density of~$100\%$, for example, the average CPU time exceeds the one for a density of~$20\%$ with almost factor ten. To a lesser extent, the average optimality gap for unsolved instances also seems to increase in the network density. As mentioned above, one possible reason why solving dense graphs requires more effort is that our method needs to enumerate more spanning trees for these graphs. Indeed, the binary $x$-variables in our mixed integer program essentially select a spanning tree, and the more such trees the graph contains, the more branching needs to be done by the branch-and-cut algorithm.

	\subsection{Quality of formulation, valid inequalities, and heuristics}\label{sec:q_bounds}
	Solving the linear programming (LP) relaxation obtained by relaxing the integrality constraints in our mixed integer program provides a lower bound on the minimal search cost. The greedy algorithm, on the other hand,  yields an upper bound as it returns a feasible starting solution. The goal of adding the valid inequalities (Section~\ref{sec:val_ineq}) and local search (Section~\ref{sec:local_search}), finally, is to tighten these bounds. To assess the quality of our formulation, valid inequalities, and heuristics, we examine how close these bounds are to the minimal search cost. Figure~\ref{fig:qual_bounds} plots the average ratio (averaged over the 10 solved instances per setting) between the bounds and optimum as a function of network size and density.
	
	\begin{figure}
		\centering
		\caption{Average ratio between bounds and the optimum. The bounds provided by the LP relaxation without cuts and with only cuts~(C2) deteriorate as network size and density increase, whereas the  other bounds remain reasonably close to the optimum.} \label{fig:qual_bounds}
		%\tikzsetnextfilename{ratio_n}%image name
		\begin{subfigure}{0.99\linewidth}
			\caption{Ratio as a function of network size $n$ for a given $60\%$ network density.}
			\centering
			\includegraphics{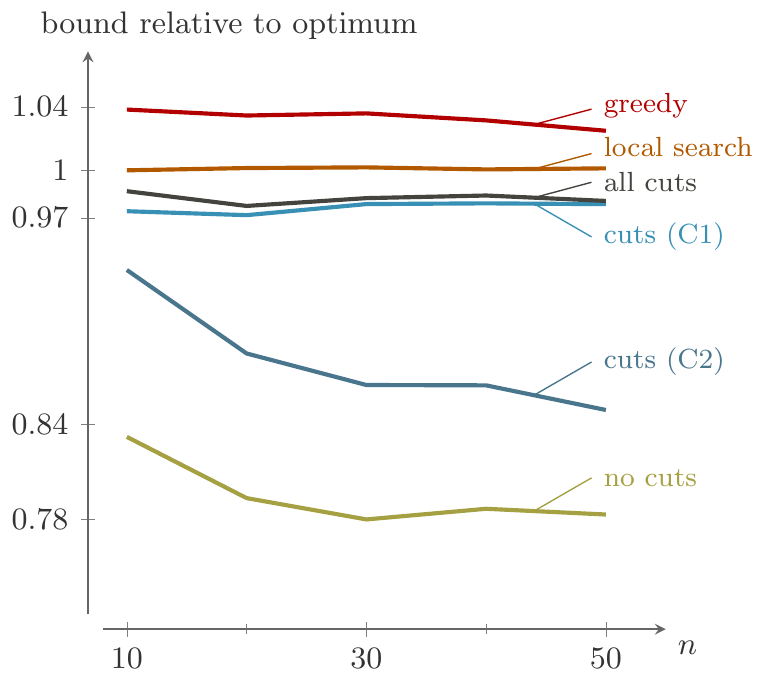}
		\end{subfigure}
		
		\begin{subfigure}{0.99\linewidth}
			\centering
			\caption{Ratio as a function of network density for a given network size $n = 40$.}
			\includegraphics{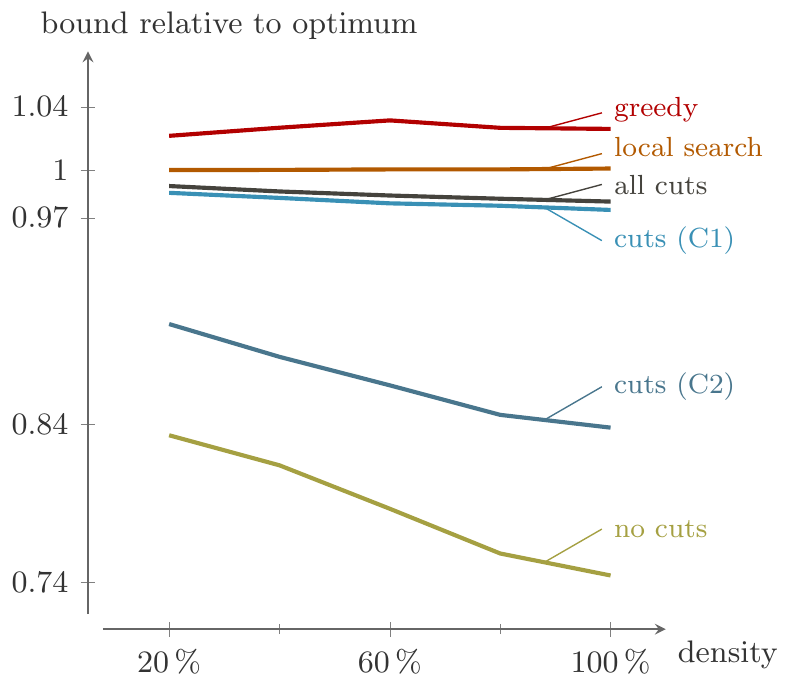}
		\end{subfigure}
	\end{figure}

	With a lower bound always exceeding $97\%$ of the minimal search cost, the LP relaxation together with both cuts~(C1) and~(C2) provides a reasonable lower bound. Although cuts~(C2) do considerably increase the lower bound compared to the formulation without valid inequalities, cuts~(C1) clearly have the most influence: the LP relaxation with only cuts~(C1) achieves a lower bound only marginally weaker than the one with all cuts included.
	
	Figure~\ref{fig:qual_bounds} also indicates that our heuristics provide high-quality solutions, especially when including the local search algorithm. On average, the greedy solution exceeds the optimal search cost with at most $4\%$, which drops to only $0.27 \%$ after performing the local search. Moreover, out of all 379 instances solved to optimality in our computational experiment, the local search could find the optimum for 305 instances and exceeded the minimal search cost with at most  $2.15\%$ for the other cases. With respect to computation times, the greedy heuristic always required less than 0.3 seconds and the local search procedure needed at most 100 seconds.  We conclude that, empirically, the greedy algorithm produces solutions much closer to optimality than the factor eight worst-case guarantee.

	Finally, observe that the lower bound provided by the LP relaxation without cuts and with only cuts~(C2) weakens as network size and density increase.
	The other bounds, in contrast, seem to be more stable. This inverse relation between network density and the quality of the formulation without cuts constitutes another reason why instances with a higher density require more effort: to make up for the weaker lower bound, cuts need to be generated. Not only does separating these cuts take time, but, more importantly, adding them also increases the time needed to solve each LP relaxation because the model grows in size.

	\section{Discussion}\label{sec:conclusion}
	We have studied exact and approximation algorithms for the expanding search problem, which  has received increasing attention in the literature since its introduction by \cite{alpern2013mining}. Our novel branch-and-cut method can solve instances that were unsolved by existing algorithms and the greedy algorithm described in this paper is the first constant-factor approximation algorithm established for the problem.
	
	The formulation at the base of our branch-and-cut procedure builds upon results from the single-machine scheduling literature and uncouples the selection of a tree from the sequencing on that tree. We believe this could be a fruitful approach for dealing with other scheduling problems as well. In particular, our formulation seems generalizable to the problem of scheduling with OR precedence constraints \citep{gillies1995scheduling}.

	Our results also contribute to the literature on search games. More specifically, consider the expanding search game where the target acts as an adversary that hides in one of the graph's vertices in order to maximize the expected search time. Together with the work of \cite{hellerstein2019solving}, our branch-and-cut procedure yields a method to compute the game's value exactly, whereas our greedy approach allows to approximate this value within factor eight. Note that these results are incomparable with the approximations of \cite{alpern2019approximate} because, in that article, the target can hide at any point of an edge in the network instead of being restricted to hide at vertices only.
	
	The tree-based local search procedure introduced in Section~\ref{sec:local_search} showed promising results in computational experiments. We also established that it finds a global optimum in case the underlying graph is a cycle, whereas other natural approaches for local neighborhoods fail on these instances. Determining the actual worst-case guarantee of the local search is an interesting problem for future research.
	
	Finally, in line with the work of \cite{tan2019scheduling}, it would be interesting to generalize the expanding search model to multiple searchers. In fact, by taking the search sequence returned by our greedy algorithm as a so-called `list schedule', the techniques of \cite{chekuri2001approximation} and \cite{tan2018schedulingWP,tan2019scheduling} seem to provide a promising starting point to obtain constant-factor approximation algorithms for the setting with multiple searchers.

	\section*{Acknowledgments}
	Ben Hermans is funded by a PhD Fellowship of the Research Foundation -- Flanders (Fonds Wetenschappelijk Onderzoek). We thank I.\ Averbakh and J.\ Pereira for kindly sharing their instances.

	\bibliographystyle{agsm}
	\bibliography{bib_ES}

	%%%%%%%%%%%%%%%%%
\end{document}